
%
 
\documentclass[10pt]{IEEEtran}

 
\pdfoutput=1  

\IEEEoverridecommandlockouts                              
		                                                  

\usepackage{flushend}

\usepackage{cite,url}

\usepackage[usenames,dvipsnames]{color}
\usepackage{graphicx,wrapfig,subfigure}

\usepackage[cmex10]{amsmath}
\usepackage{amsfonts,amssymb,mathrsfs}   
\usepackage[colorlinks,bookmarksopen,bookmarksnumbered,citecolor=red,urlcolor=red]{hyperref}
\usepackage{soul}

\newlength{\noteWidth}
\setlength{\noteWidth}{.5in}
\long\def\notes#1{\ifinner
           {\footnotesize #1}
           \else
           \marginpar{\parbox[t]{\noteWidth}{\raggedright\footnotesize #1}}
       \fi\typeout{#1}}

\def\notes#1{\typeout{read notes: #1}}  

\def\spm#1{\notes{SPM:  #1}}






\def\ptcl{\varrho^{\text{etr}}}
\def\thetamb{\theta^0}

\def\Thetamin{\Theta^{\text{min}}}
\def\Thetamax{\Theta^{\text{max}}}

%
%

\def\sq{\hbox{\rlap{$\sqcap$}$\sqcup$}}
\def\qed{\ifmmode\sq\else{\unskip\nobreak\hfil
\penalty50\hskip1em\null\nobreak\hfil\sq
\parfillskip=0pt\finalhyphendemerits=0\endgraf}\fi\medskip}


\long\def\defbox#1{\framebox[.9\hsize][c]{\parbox{.85\hsize}{%
\parindent=0pt
\baselineskip=12pt plus .1pt      
\parskip=6pt plus 1.5pt minus 1pt 
 #1}}}


\long\def\beginbox#1\endbox{\subsection*{}%
\hbox{\hspace{.05\hsize}\defbox{\medskip#1\bigskip}}%
\subsection*{}}

\def\endbox{}


\def\transpose{{\hbox{\it\tiny T}}}

\newsavebox{\junk}
\savebox{\junk}[1.6mm]{\hbox{$|\!|\!|$}}


\def\state{{\sf X}}

\newcommand{\field}[1]{\mathbb{#1}}

\def\Re{\field{R}}

\def\One{\mbox{\rm{\large{1}}}}

\def\ind{\field{I}}


\def\cP{{\check{P}}}



\def\bfmath#1{{\mathchoice{\mbox{\boldmath$#1$}}%
{\mbox{\boldmath$#1$}}%
{\mbox{\boldmath$\scriptstyle#1$}}%
{\mbox{\boldmath$\scriptscriptstyle#1$}}}}



\def\bfmr{\bfmath{r}}

\def\bfmX{\bfmath{X}}

\def\bfmY{\bfmath{Y}}

\def\bfmhhaY{\bfmath{\hhaY}} 
\def\bfmhhaY{\hbox to 0pt{$\widehat{\bfmY}$\hss}\widehat{\phantom{\raise 1.25pt\hbox{$\bfmY$}}}}

\def\bfmV{\bfmath{V}}  
\def\bfmW{\bfmath{W}}  
  
\def\bfmZ{\bfmath{Z}}

\def\bfPhi{\bfmath{\Phi}}

\def\bfPsi{\bfmath{\Psi}}

\def\bfzeta{\bfmath{\zeta}}








\def\til={{\widetilde =}}
\def\tilPhi{{\widetilde \Phi}}



\def\clI{{\cal I}}

\def\clO{{\cal O}}

\def\clV{{\cal V}}

\def\clY{{\cal Y}}





 \def\FRAC#1#2#3{\genfrac{}{}{}{#1}{#2}{#3}}

\def\ddtp{{\mathchoice{\FRAC{1}{d^{\hbox to 2pt{\rm\tiny +\hss}}}{dt}}%
{\FRAC{1}{d^{\hbox to 2pt{\rm\tiny +\hss}}}{dt}}%
{\FRAC{3}{d^{\hbox to 2pt{\rm\tiny +\hss}}}{dt}}%
{\FRAC{3}{d^{\hbox to 2pt{\rm\tiny +\hss}}}{dt}}}}

\def\half{{\mathchoice{\FRAC{1}{1}{2}}%
{\FRAC{1}{1}{2}}%
{\FRAC{3}{1}{2}}%
{\FRAC{3}{1}{2}}}}

\def\eqdef{\mathbin{:=}}

\def\Prob{{\sf P}}

\def\Expect{{\sf E}}

\def\average#1,#2,{{1\over #2} \sum_{#1}^{#2}}

\def\eye(#1){{\bf(#1)}\quad}


\newtheorem{theorem}{Theorem}[section]

\newtheorem{proposition}[theorem]{Proposition}

\def\Proposition#1{Prop.~\ref{#1}}

\def\Section#1{Section~\ref{#1}}

\def\eq#1/{(\ref{e:#1})}

\newcommand{\beqn}[1]{\notes{#1}%
\begin{eqnarray} \elabel{#1}}

\newcommand{\eeqn}{\end{eqnarray} }

\newcommand{\beq}[1]{\notes{#1}%
\begin{equation}\elabel{#1}}

\newcommand{\eeq}{\end{equation}}

\def\bdes{\begin{description}}
\def\edes{\end{description}}


\def\bary{{\overline {y}}}

\def\barSigma{\overline{\Sigma}}



%

\newcounter{rmnum}
\newenvironment{romannum}{\begin{list}{{\upshape (\roman{rmnum})}}{\usecounter{rmnum}
\setlength{\leftmargin}{8pt}
\setlength{\rightmargin}{6pt}
\setlength{\itemsep}{2pt}
\setlength{\itemindent}{-2pt}
}}{\end{list}}

\newcounter{anum}


%
{\end{list}}

\def\ass(#1:#2){(#1\ref{#1:#2})}

\def\ritem#1{
\item[{\sf \ass(\current_model:#1)}]
}

\newenvironment{recall-ass}[1]{%
\begin{description}
\def\current_model{#1}}{
\end{description}
}


\newcommand{\bd}{\begin{description}}
\newcommand{\ed}{\end{description}}
\newcommand{\bt}{\begin{theorem}}
\newcommand{\et}{\end{theorem}}
\newcommand{\ba}{\begin{array}{rcl}}
\newcommand{\ea}{\end{array}}


\def\Proposition#1{Prop.~\ref{#1}}
\def\Prop#1{Prop.~\ref{#1}}

\def\haY{\widehat Y}
\def\haPhi{\widehat \Phi}

\def\util{\mathchoice{\mbox{\small$\cal U$}}%
{\mbox{\small$\cal U$}}%
{\mbox{$\scriptstyle\cal U$}}%
{\mbox{$\scriptscriptstyle\cal U$}}}

\def\bfgamma{\bfmath{\gamma}}

\def\diag{\,\text{\rm diag}\,}   

\def\Spx{\textsf{S}}

\graphicspath{{figures/}}

\def\Ebox#1#2{%
\begin{center}
\includegraphics[width= #1\hsize]{#2} 
\end{center}}

\def\tilY{\widetilde{Y}}


\def\Fig#1{Fig.~\ref{#1}}

\def\ind{\field{I}}

\def\Re{\field{R}}

\def\piload{\Gamma}

\def\Health{{\cal L}}
\def\health{\ell}


\title{State Estimation for the Individual and the Population in Mean Field Control 
\\
with Application to Demand Dispatch  
}

\author{Yue Chen, 
Ana Bu\v{s}i\'c, and Sean Meyn
\thanks{Research   supported by  NSF grants CPS-0931416 and CPS-1259040,
and the French National Research Agency grant ANR-12-MONU-0019}
\thanks{Y.C. and S.M. are with the Department of Electrical and Computer
Engg.\ at the University of Florida, Gainesville. A.B.\ is with Inria and the Computer Science Dept. of \'Ecole Normale Sup\'erieure, Paris, France.}%
}

\begin{document}

\maketitle

\begin{abstract}

This paper concerns state estimation problems in a mean field control setting. In a finite population model, the goal is to estimate the joint distribution of the population state and the state of a typical individual. The observation equations are a noisy measurement of the population. 

The general results are applied to demand dispatch for regulation of the power grid, based on randomized local control algorithms. In prior work by the authors it is shown that local control can be designed so that the aggregate of loads behaves as a controllable resource, with accuracy matching or exceeding traditional sources of frequency regulation. The operational cost is nearly zero in many cases. 

The information exchange between grid and load is minimal, but it is assumed in the overall control architecture that the aggregate power consumption of loads is available to the grid operator. It is shown that the Kalman filter can be constructed to reduce these communication requirements, and to provide the grid operator with accurate estimates of the mean and variance of quality of service (QoS) for an individual load.


\end{abstract}
 

\section{Introduction}

\notes{Given the new title, we need to say that the methods go beyond power or pools!}
\notes{Ana: Also, the paper may now go to reviewers that do not know anything on power. I added two sentences to explain what is it that we control.}

Mean field models are a valuable tool for design and performance approximation for certain classes of interacting systems \cite{huacaimal07,matkoccal13,chebusmey14}.  The infinite-population mean-field equations provide tremendous insight, but ultimately we must translate this insight to address a finite-population reality.   In this paper we propose algorithms based on the Kalman filter to obtain estimates of first and second order statistics of the population and a typical individual, based on noisy observations of the population.    

While the potential applications are far broader than power systems,   for ease of exposition it is convenient to restrict attention to one application.

Renewable energy sources such as wind and solar power have a high degree of unpredictability and time variation, which complicates balancing supply and demand. One possible way to address this challenge is to harness the inherent flexibility in demand of many types of loads.

\textit{Demand Response} is traditionally meant as a reduction in load in response to some grid-level event.   It is in use today for peak-shaving (smoothing demand),  and for contingency reserves (load-shedding following generation loss).   

It is argued in \cite{calhis11,barbusmey14,meybarbusyueehr15} that the value of demand-side flexibility is far greater than this.   
Loads can supply a range of grid services, such as the balancing reserves required at BPA,  or the Reg-D/A regulation reserves used at PJM \cite{barbusmey14}.  These grid services can be obtained without impacting quality of service (QoS)  for consumers \cite{chebusmey14,haolinkowbarmey14}.   
This is only possible through design. 
The term \textit{Demand Dispatch}, introduced in \cite{brolureispiwei10}, is used 
to emphasize the difference between the goals of our own work and traditional demand response. 

%
%
%
%
%
%

The application in this  paper concerns a large collection of loads whose power consumption is not continuously variable.  Examples include thermostatically controlled loads (TCLs), as considered in  \cite{matkoccal13,kizmal14a},  and irrigation or pool-pumps \cite{meybarbusyueehr15,chebusmey14}.  In these   papers 
and
\cite{chrtomlebpao14,lumThesis15}
it is argued that randomization at the load is valuable to avoid synchronization, and to simplify control at the grid level.   

In much of this prior work, a mean-field model is obtained for control design at the grid level -- this is a deterministic model of the aggregate of loads,  obtained as a law of large numbers limit as the population of loads tends to infinity.   The control solution adopted in \cite{matkoccal13} is based on state-feedback for a linear state space model with partial observations.  Although the mean-field model is bi-linear,   it is represented as a linear model by treating the product of inputs and states as a new input; 
this is why state estimation is needed for implementation of the algorithm.  In  \cite{meybarbusyueehr15} a randomized policy is designed for each load so that the mean-field model is an input-output system that is easily controlled without the use of state estimates.   

\notes{Johanna told me she tried out "my estimator"   - what has JM done?  \\
See this paper Raginsky suggested,  cite{berwu98}.
 ...
 I doubt it is useful
 \\
 Also look at
Kalman filtering in triplet Markov chains  (IEEE Signal processing)
Generalized dynamic linear models for financial time series
   ... and ...
   Diffusion Approximation for Bayesian Markov Chains
} 

For simplicity, in this paper   attention is restricted to the setting of   \cite{meybarbusyueehr15}, in which each load evolves as a controlled Markov chain.   The transition probability is determined by its own state, and  a  scalar signal $\bfzeta$ broadcast from a balancing authority (BA).   The extension to vector inputs, as in \cite{matkoccal13}, requires only changes in notation.   
    
The common dynamics are defined by a controlled transition matrix $\{P_\zeta : \zeta\in\Re\}$.  For the $i$th load, there is a state process $\bfmX^i$ whose transition probability is defined by,
\begin{equation}
P_{\zeta}(x^-,x^+) 
=
\Prob\{X^i_{t+1} = x^+ \mid X^i_t = x^- ,\,  \zeta_t=\zeta \}  
\label{e:Pzeta}
\end{equation}
where $x^-$ and $x^+$ are possible  state-values.  In the case of a water heater, the state $x\in\state$ might represent temperature of the water, and whether the unit is operating or not.  

If there are $N$ loads operating independently, conditional on the common signal $\bfzeta$, then
the empirical distribution (i.e., the histogram of state values) is defined as the average,
\[	 
\mu_t^N( x) 
=\frac{1}{N}\sum_{i=1}^N  \ind\{X^i_t =x \} 
\, , \quad x\in\state
\]
Viewed as a row vector, the following recursion is central to the analysis in \cite{meybarbusyueehr15,chebusmey14}:
\begin{equation}
	 \mu_{t+1}^N = \mu_t^N  P_{\zeta_t}+W_{t+1}^\transpose\, .
	\label{e:empir_dist}
\end{equation}
 An observation model  is also linear in the state,
\begin{equation}
	Y_t = \sum_x\mu_t^N (x)\util(x)  + V_t
	\label{e:obs}
\end{equation}
where $\util\colon\state\to\Re$.  
In applications to demand dispatch,   $\util(x)$ represents power consumption of a load when its state is $x$, so that $\sum_x\mu_t^N (x)\util(x)$ is the average power consumption at time $t$.

 It is established in \cite{meybarbusyueehr15,chebusmey14} that  
$\bfmW\eqdef \{W_t : t\ge 1\}$ is a $d$-dimensional martingale-difference sequence (and hence uncorrelated).  The i.i.d.\ sampling model assumed in the present paper implies that  $\bfmV=\{V_t : t\ge 1\}$ is a martingale-difference sequence  that is also uncorrelated with $\bfmW$.

 \Prop{t:PhiLin} contains full details of this system description.  
 
The Kalman filter is developed in two settings, each with the same observation process:
 The first is constructed to obtain    estimates of $\mu_t^N$.  The second filter obtains estimates of the joint statistics of a larger state that includes  both $\mu_t^N$ and  the state of a typical individual.  
 The main conclusions are summarized here:
\begin{romannum}

\item  A measurement architecture is proposed in which each load broadcasts its state only occasionally -- say, once per day.   The observation equations in the aggregate model then include white noise, whose conditional variance is computed.  The state equations for the population/individual  dynamics evolve as a linear stochastic system with white noise disturbance, whose conditional covariance matrix is also computable. 


\spm{At Illinois we always used Grammian -- I don't know why Wikipedia uses just one 'm'}
\item  In the examples considered,  the observability Grammian is not full rank, and an approximate time-invariant model is also  unobservable.  Moreover, \Prop{t:symNot} demonstrates that a symmetry property (that holds in the example of \cite{meybarbusyueehr15}),
implies that the model cannot be observable.

However, in numerical experiments it is found that the Kalman filter remains valuable for reducing the impact of measurement noise,  and for estimating the distribution of quality of service.    
   In particular,   estimates of certain first and second order statistics of an individual load are remarkably accurate, even though the measurements are a noisy sequence of samples from the population.
   
\item      In the face of un-modeled dynamics such as load heterogeneity, or additional ``opt-out'' control used to enforce QoS bounds,  the Kalman filter combined with PI control continues to perform nearly perfectly, even with 0.1\%\ sampling of loads.
\end{romannum}

%
%
%
%

There are in fact two general formulations of the Kalman filter.  In the first, most typical setting,  the sequence of Kalman gains is deterministic;  obtained through a Riccati equation (a recursive equation driven by the covariance matrices for the state and observation noise).   This is known to be $L_2$-optimal over all estimators that are \textit{linear} functions of the observations.  

A second formulation of the Kalman filter uses \textit{conditional} covariance matrices to define the Kalman gain  --- see (\ref{e:NoiseCondCov},\ref{e:NoiseObsCondCov})
 and surrounding discussion.   If the state/observation noise is conditionally Gaussian,  then this  Kalman filter coincides with the nonlinear filter, which is $L_2$-optimal over all causal estimators~\cite{cai88}.   Because the Riccati equation is a nonlinear function of the covariance matrices,   this version of the Kalman filter may be a nonlinear function of the observations.
 
The second  is attractive because it is easy to compute formulae for the conditional covariance matrices,  while the unconditional covariance matrices only admit approximations.   Moreover, when considering the dynamics of the aggregate, a Gaussian approximation of the noise is justified by the Central Limit Theorem (CLT).

\paragraph*{Related research}
In addition to the references cited above, there are many papers on demand dispatch based on centralized control, or relying on real-time prices to solve the control problem of interest. Much of the latter is closer to demand response,  and has little intersection with the research summarized here.

An application of this formulation of the Kalman filter was considered previously in \cite{lipkrirub84} for a single Markov chain without control, with measurements subject to Gaussian error.  

\spm{Nov 2015:  I reread Montreal work, and it is just partially observed LQG}

There are several recent papers with similar goals in the literature on mean-field models.  Most closely related is \cite{caikiz13a,sencai14} which concerns  partially observed LQG mean field games, with several classes of players.   The state estimation problem is from the point of view of the individual -- each ``minor agent'' obtains noisy and partial observations,  and wishes to estimate the ``major state'' as well as the aggregate. The solution is obtained through the construction of a Kalman filter.  This prior work is also motivated by application to power systems.


\smallbreak

The remainder of the paper consists of four sections organized as follows.  The following section describes the stochastic model on which the estimation algorithms are based.  
Filtering equations are derived in \Section{s:Kalman} for a collection of 
controlled Markov models with partial observations.  
The algorithms have been tested in various different settings -- results for demand dispatch using residential pools and also  TCLs
are summarized in
\Section{s:num}.
Conclusions and directions for future research are contained in
\Section{s:con}.

Further results may be found in the dissertation \cite{YueChenThesis16}.

\section{Mean Field Model}
\label{s:mfm}

It is assumed throughout the paper that a family of Markov transition matrices $\{P_\zeta : \zeta\in\Re\}$ is given that is continuous in the parameter $\zeta$.   The finite state space is denoted $\state = \{x^1,\dots,x^d\}$, so that each $P_\zeta$ is a $d\times d$ matrix. 

The mean-field model is defined as the approximation of 	\eqref{e:empir_dist}
obtained as $N\to\infty$.  This is the deterministic recursion,  
\begin{equation}
	 \mu_{t+1}= \mu_t  P_{\zeta_t} ,
	\label{e:mfm}
\end{equation}
with $\mu_0$ given, and where $\bfzeta$ is obtained via causal feedback.   This paper is concerned with the stochastic system \eqref{e:empir_dist},  but the steps used to justify the
limit
 will lead to the second-order statistics required to describe the Kalman filter.

\subsection{Aggregate dynamics}

The individual dynamics are described by the controlled Markov model \eqref{e:Pzeta}.   We lift the state space from the $d$-element set
$\state = \{x^1,\cdots,x^d\}$,  to the $d$-dimensional simplex $\Spx$.  For the $i^{th}$ load at time $t$, the element $\piload_t^i  \in \Spx$ is the degenerate distribution whose mass is concentrated at $x$ if   $X^i_t= x$;
that is, $\piload_t^i =\delta_x$.  These distributions evolve according to a random linear system,
\begin{equation}
	\piload_{t+1}^i=\piload_t^iG_{t+1}^i 
\label{e:piG}
\end{equation}
in which $\piload_t^i$ is interpreted as a $d$-dimensional row vector,   $G_t^i$ is a $d\times d$ matrix with entries $0$ or $1$ only, and $\sum_l G_t^i(x^j,x^l)=1$ for all $j$.   

The following assumptions are imposed throughout.


\begin{romannum}
\item[A1:] \spm{new:}
The input is defined by causal output feedback:
For a continuous family of functions $\phi_t\colon \Re^{t+1}\to\Re$,  
 \[
\zeta_t = \phi_t(Y_0,\dots, Y_t)\,,\qquad t\ge 0.
\]

 \item[A2:] 
For some function $\Xi$ with domain $ \Re\times [0,1]$,
and range equal to the set of $d\times d$ matrices,   
\begin{equation}
 G_t^i=\Xi(\zeta_{t-1},\xi_t^i),
\label{e:A1}
\end{equation}
 where $\{ \xi_t^i: t\geq 1, \ i\ge 1\}$ are i.i.d.\ on $[0,1]^d$ with uniform distribution.

\item[A3:]  
The initial conditions $\{\piload_0^i : 1\le i\le N\}$ are i.i.d., with
\[
\Prob\{ \piload_0^i (x)=1\} = \mu_0(x),\quad x\in\state 
\] 
\item[A4:]  The measurements are obtained through sampling:   
There is a bounded sequence $\{ \gamma_t: t\geq 1 \}$ of $N$-dimensional vectors with non-negative entries, 
and independent of the   $\{ \xi_t^i\}$, such that
$\Expect [  \gamma_t(k) ] =N^{-1}$  for each $t$ and $k$, and
\[
Y_t =  \sum_{i=1}^N \gamma_t(i) \util(X^i_t). 
\]
Moreover,  the distribution of $\gamma$ is unchanged by permutations of its $N$ components.   
\end{romannum}
The first assumption is based on the control architecture assumed in all prior work.  
\spm{new:}
 Assumption~A2 is standard in this context \cite{lipkrirub84},  and can be assumed since any probability mass function (pmf) is a function of a uniformly distributed random variable.
 
Assumption A4 can be used to model random sampling with or without replacement. 
\spm{new:}
 Examples are given in \Section{s:num}.   This assumption leads to the measurement equation \eqref{e:obs}; additional additive noise can be included in the observation equation, provided this measurement noise is independent of the sampling process.

The filtration of observations is denoted,
\begin{equation}
\clY_t = \sigma\{ Y_r,\zeta_r : r\le t\},\quad t\ge 0.
\label{e:clY}
\end{equation}
Under A1--A3, $G_{t+1}^i $
 is conditionally independent of $\{\piload_0^i,\cdots,\piload_t^i\}$, given $\clY_t$, with
\begin{equation}
\label{e:EG=P} 
 	\Expect[G_{t+1}^i \mid \clY_t ]=P_{\zeta_t}.
\end{equation}

Two filtering problems are considered in this paper.  In the first, the state is equal to the empirical distributions expressed as column vectors: 
\begin{equation}
\Phi_t(k) = \mu^N_t(x^k) ,\qquad 1\le k\le d,   \ t\ge 0.
\label{e:Phi}
\end{equation} 
In the second,  the goal is to estimate the state for an individual load.  For the $i$th load, this is denoted
\begin{equation}
\Phi_t^i(k) =  \piload_t^i (x^k) ,\qquad 1\le k\le d,   \ t\ge 0.
\label{e:Phi-i}
\end{equation}
The two state processes are evidently related,
\begin{equation}
\Phi_t = \frac{1}{N} \sum_{i=1}^N \Phi_t^i
\label{e:Phi-sum}
\end{equation} 

\begin{proposition}
\label{t:PhiLin}
The two state processes each evolve as linear systems, 
\begin{eqnarray}
\Phi_{t+1} &=& A_t \Phi_t + W_{t+1}
\label{e:StateForEst}
\\
\Phi^i_{t+1} &=& A_t \Phi_t^i + W_{t+1}^i
\label{e:StateForEst-i}
\end{eqnarray}
where $A_t=P_{\zeta_t}^\transpose$,   
and for each $i$, $t$, 
\begin{eqnarray}
 W_{t+1}  &=& \frac{1}{N}\sum_{k=1}^N W_{t+1}^k
\label{e:W}
\\
W^i_{t+1} &=&
\bigl[
\piload_t^i(G_{t+1}^i -P_{\zeta_t}) \bigr]^\transpose
\label{e:Wi}
\end{eqnarray}
The observation equation \eqref{e:obs} can be written,
\begin{equation}
Y_t = C \Phi_t + V_t
\label{e:obsPhi}
\end{equation}
where $C_j = \util(x^j)$ for each $j$.    

Moreover,  each of the sequences $\bfmW^i$,  $\bfmW$, $\bfmV$ is a martingale difference sequence,  and the sequence $\bfmV$ is also uncorrelated with each of the  $\{\bfmW^i\}$ and also  $\bfmW$.
\end{proposition}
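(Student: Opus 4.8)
The plan is to take the three assertions in turn: the two linear representations by direct algebra, and the second-order properties of $\bfmW^i,\bfmW,\bfmV$ by conditioning on a filtration that carries \emph{all} of the primitive randomness, which is what lets the argument survive the output feedback in A1.

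\emph{Linear and observation equations.} In \eqref{e:piG} I would write $G_{t+1}^i=P_{\zeta_t}+(G_{t+1}^i-P_{\zeta_t})$, so that $\piload_{t+1}^i=\piload_t^iP_{\zeta_t}+\piload_t^i(G_{t+1}^i-P_{\zeta_t})$; transposing and using $\Phi_t^i=(\piload_t^i)^\transpose$ from \eqref{e:Phi-i} gives \eqref{e:StateForEst-i} with $A_t=P_{\zeta_t}^\transpose$ and $W_{t+1}^i$ exactly \eqref{e:Wi}. Because the broadcast $\zeta_t$ — and hence $A_t$ — is common to all loads, averaging \eqref{e:StateForEst-i} over $i$ and invoking \eqref{e:Phi-sum} yields \eqref{e:StateForEst} with $W_{t+1}=N^{-1}\sum_kW_{t+1}^k$, i.e.\ \eqref{e:W}. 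For the observation equation I would split $\gamma_t(i)=N^{-1}+(\gamma_t(i)-N^{-1})$ in the expression for $Y_t$ in A4, giving $Y_t=N^{-1}\sum_i\util(X_t^i)+V_t$ with $V_t\eqdef\sum_i(\gamma_t(i)-N^{-1})\util(X_t^i)$; the first term is $\sum_x\mu_t^N(x)\util(x)=\sum_k\Phi_t(k)\util(x^k)=C\Phi_t$ for $C_k=\util(x^k)$, which is \eqref{e:obsPhi} and also recovers \eqref{e:obs}.

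\emph{Martingale-difference property.} Let $\clG_t$ be the $\sigma$-field generated by all the primitive randomness through time $t$: the initial conditions $\{\piload_0^j\}$, the disturbances $\{\xi_s^j:s\le t\}$, and the sampling vectors $\{\gamma_s:s\le t\}$. Assumptions A1--A2 make $\zeta_s,Y_s$ (for $s\le t$), as well as $\piload_t^i$, $\Phi_t$, $\Phi_t^i$, measurable with respect to $\clG_t$, so $\clY_t\subseteq\clG_t$. Since $G_{t+1}^i=\Xi(\zeta_t,\xi_{t+1}^i)$ with $\xi_{t+1}^i$ uniform and independent of $\clG_t$ while $\zeta_t$ is $\clG_t$-measurable, the construction \eqref{e:A1}--\eqref{e:EG=P} gives $\Expect[G_{t+1}^i\mid\clG_t]=P_{\zeta_t}$, hence $\Expect[W_{t+1}^i\mid\clG_t]=[\piload_t^i(P_{\zeta_t}-P_{\zeta_t})]^\transpose=0$; averaging over $i$ handles $\bfmW$. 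For $\bfmV$, conditional on $\clG_t$ each $X_{t+1}^i$ is a function of $\xi_{t+1}^i$ alone, while by the i.i.d.\ sampling model of A4 the vector $\gamma_{t+1}$ is independent of $\clG_t$ and of all $\{\xi_{t+1}^j\}$ with $\Expect[\gamma_{t+1}(i)]=N^{-1}$, so $\Expect[V_{t+1}\mid\clG_t]=\sum_i\Expect[\gamma_{t+1}(i)-N^{-1}]\,\Expect[\util(X_{t+1}^i)\mid\clG_t]=0$. In particular each of the three sequences is uncorrelated across time.

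\emph{Cross-correlation, and the main obstacle.} It remains to show $\Expect[V_sW_t^i]=0$ for every $s,t$ (averaging over $i$ then covers $\bfmW$), and here the feedback in A1 must be handled with care, since $V_s$ re-enters the dynamics through the control $\zeta_{t-1}$ that appears in $W_t^i$. If $s\ge t$, both $W_t^i$ and every $X_s^j$ are measurable with respect to $\clG_{s-1}\vee\sigma\{\xi_s^j:j\}$ — the output feedback only feeds $Y_0,\dots,Y_{s-1}$, hence $\gamma_0,\dots,\gamma_{s-1}$, into the state at time $s$ — whereas $\gamma_s$ is independent of this $\sigma$-field, so conditioning on it and pulling $W_t^i$ out leaves $W_t^i\sum_j\Expect[\gamma_s(j)-N^{-1}]\util(X_s^j)=0$. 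If $s<t$, I would instead condition on $\clG_{t-1}$: then $V_s$ and $\piload_{t-1}^i$ are $\clG_{t-1}$-measurable while $\xi_t^i$ is independent of $\clG_{t-1}$, so $\Expect[W_t^i\mid\clG_{t-1}]=[\piload_{t-1}^i(\Expect[\Xi(\zeta_{t-1},\xi_t^i)\mid\clG_{t-1}]-P_{\zeta_{t-1}})]^\transpose=0$ and hence $\Expect[V_sW_t^i\mid\clG_{t-1}]=V_s\,\Expect[W_t^i\mid\clG_{t-1}]=0$. The only genuine subtlety is this $s<t$ case — the algebra of the first step and the conditional-mean-zero checks of the second are routine — and the reason the whole argument is organized around the full-information filtration $\{\clG_t\}$ rather than $\{\clY_t\}$ is precisely that $W_t^i$ must be seen to be conditionally centred \emph{before} any measurement at times $s<t$ is revealed.
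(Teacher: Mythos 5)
Your proposal is correct and follows essentially the same route as the paper: the same decomposition $G_{t+1}^i=P_{\zeta_t}+(G_{t+1}^i-P_{\zeta_t})$ behind \eqref{e:StateForEst-i} and \eqref{e:StateForEst}, the same identification of $V_t$ as the sampling error $\sum_i(\gamma_t(i)-N^{-1})\util(X^i_t)$, and the same ingredients (conditional centering of $G_{t+1}^i$ as in \eqref{e:EG=P}, $\Expect[\gamma_t(i)]=N^{-1}$, and independence of $\bfgamma$ from $\{\xi_t^i\}$) for the martingale-difference and uncorrelatedness claims. Your only departure is organizational --- conditioning on a full-information filtration $\clG_t$ instead of on $\clY_t$, and spelling out the two-case cross-correlation argument that the paper leaves implicit --- which adds detail but not a different idea.
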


\begin{proof}
The martingale difference property for  each of the $\{\bfmW^i\}$  (and hence also $\bfmW$),
follows immediately from \eqref{e:EG=P};  see also \cite{chebusmey14}.
The  sequence $\bfmV$ can be expressed,
\[
V_t = - \sum_k \mu^N_t(x^k)  \util(x^k)  + \sum_{i=1}^N \gamma_t(i) \util(X^i_t)
\]
This has mean zero, from the exchangeability assumption in A4, and it is a martingale difference sequence because the sequence $\bfgamma$ is i.i.d.\    It is uncorrelated with $\{\bfmW^i\}$ and   $\bfmW$ under the assumption that $\{\gamma_t\}$ is independent of $\{\xi_t^i\}$.
 \end{proof}

  A  derivation of the conditional state covariances is
 given in \Section{s:KFstate}.
 
For a linear-Gaussian model, the Kalman filter equations are intended to approximate the conditional mean and covariance of the state. In the first model \eqref{e:StateForEst}
they are denoted,
\begin{equation}
\haPhi_t =  \Expect[\Phi_t  \mid\clY_t]\,, \quad
\Sigma_t = \Expect[\tilPhi_t \tilPhi_t^\transpose\mid\clY_t]
\label{e:meanW-e:SigPhi}
\end{equation}
with $\tilPhi_t =\Phi_t - \haPhi_t $.  For large $N$ it can be argued via the CLT that $(\bfmV,\bfmW)$ is approximately conditionally Gaussian given the observations.  We might expect the Kalman filter to approximate the optimal nonlinear filter in this case.   

In  \Section{s:EstInd}  the state space is extended to obtain estimates of QoS metrics for an individual load that may not be a function of the respective state $\Phi^i_t$.

\subsection{Example:  Intelligent pools}

Here we focus on a single example in which each load is a residential pool pump.  
\Section{s:num} contains
  extensions to other loads.  

In the original model of \cite{meybarbusyueehr15}, the state space is taken to be the finite set,
\begin{equation}
\state=\{  (m,k) :  m\in  \{  \oplus,\ominus\}  ,\  k \in \{1,\dots,\clI\} \}  
\label{e:poolstate}
\end{equation}
where $\clI>1$ is an integer.  If $X_t^i = (\oplus, k)$,  this means that the pool pump is on at time $t$, and has remained on for the past $k$ time units.   In this paper we take the same state space,  but with a new interpretation of each state.

\begin{figure}[h]
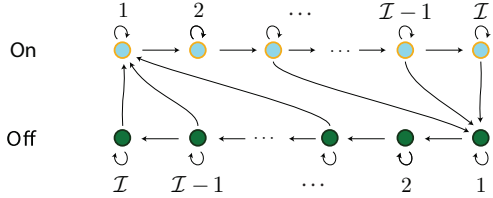

\Ebox{.75}{pppDynamicsRS.pdf} 
\vspace{-2.5ex}
\caption{State transition diagram for pool  model.}
\label{f:ppp}
\vspace{-1.25ex}
\end{figure}

The controlled transition matrix is of the form,
\begin{equation}
P_\zeta = (1-\delta)I + \delta \cP_\zeta
\label{e:poolP}
\end{equation}
in which $\cP_\zeta$ is the transition matrix used in \cite{meybarbusyueehr15},  and $\delta\in (0,1)$. At each time $t$, a weighted coin is flipped with probability of heads equal to $\delta$.  If the outcome is a tail,  then the   state does not change.  Otherwise,  a transition is made from the current state $x$ to a new state $x^+$ with probability $ \cP_{\zeta_t}(x,x^+)$.   

A state transition diagram is shown in \Fig{f:ppp}.  The state transition diagram for $\cP_\zeta$ is identical, except that the self-loops are absent.

\begin{figure}[h]
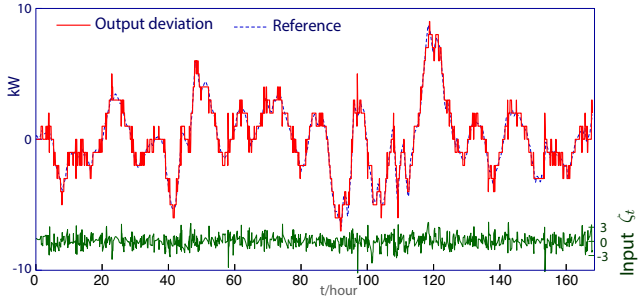

\Ebox{.95}{Tracking_20pools-lr.pdf} 
\vspace{-2.5ex}
\caption{The deviation in power consumption tracks well even with only  20 pools engaged.}
\label{f:20pools}
\vspace{-1.25ex}
\end{figure}

The motivation comes from conflicting needs of the grid and the load:   a single load turns on or off only a few times per day, yet the grid operator wishes to send a signal far more frequently -- In this example we assume every 5 minutes.   If the sampling increments for each load  were taken to be 5 minutes, then it would be necessary to take $\clI$ very large in the approach of \cite{meybarbusyueehr15}.

In this paper $\cP_\zeta$ is obtained using the optimal-control approach of  \cite{meybarbusyueehr15}; we take $\clI=48$, and hence $d=|\state|=96$.   It is assumed that $\delta=1/6$,   so that  the pool state changes every 30 minutes on average.   In \cite{meybarbusyueehr15} it is shown that the transition matrix has desirable properties for control:  the linearized dynamics are minimum phase,  with positive DC gain.   Hence, for example, 
a persistent positive value of $\zeta_t$ will lead to an increase in aggregate power consumption.

For sake of illustration,
\Fig{f:20pools} shows tracking performance of this scheme with only \textit{twenty pools}.     Each pool is assumed to consume 1~kW when operating, and each has a 12 hour cleaning cycle.   The grid operator uses PI compensation to determine $\bfzeta$ (see \Section{s:hetero}
 for details).  With such a small number of loads it is not surprising to see some evidence of quantization.  For 100 loads or more, and the reference scaled proportionately, the tracking is nearly perfect.

Two QoS metrics have been considered for this model.  First is `chattering' -- a large number of switches from on to off.  A large value means poor QoS, but this is already addressed through design of the controlled transition matrix \cite{meybarbusyueehr15}.   The design of $P_\zeta$ also helps to enforce  upper and lower bounds on the duration of cleaning each day.

A second metric is total cleaning over a time horizon of one week or more. This is the QoS metric considered in \cite{chebusmey14}.  
In this paper we consider a discounted version:
We assume that $P_0$ has a unique invariant pmf $\pi_0$.
With $\health\colon\state\to\Re$ a given function with zero steady-state mean,  $\sum_x \pi_0(x) \health(x) =0$,  we define for each $i$ and $t$,
\begin{equation}
\Health^i_t = \sum_{k=0}^t \beta^{t-k} \health(X^i_k)
\label{e:QoS15}
\end{equation} 
with $\beta\in (0,1]$ a constant.
The function $\health(x) = \ind(m=\oplus) - \ind(m=\ominus)$ was used in \cite{chebusmey14} in the case of a 12 hour cleaning cycle (recall the notation $x= (m,k) $).

Experimental results surveyed in \Section{s:num} demonstrate that it is possible to estimate functionals of the state process such as the QoS metric $\{\Health^i_t\}$, 
even if the observations are subject to   significant measurement noise.



%
%

\begin{figure}
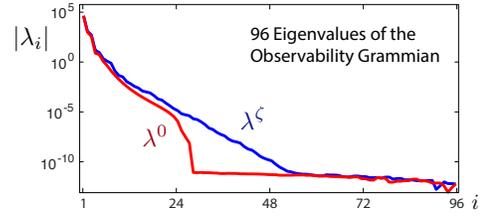

\Ebox{.7}{ObsGramian_full-lr.pdf} 
\vspace{-2.5ex}
\caption{Eigenvalues for the observability Grammian for the pool  model in two cases:   The magnitude of eigenvalues decays rapidly for a typical sample-path of $\bfzeta$,  and for the LTI model obtained with $\bfzeta\equiv 0$.}
\label{f:obsGramm}
\vspace{-1ex}
\end{figure}

In anticipation of the results to come we ask, \textit{what would linear systems theory predict with respect to state estimation performance?}  The observability Grammian associated with \eqref{e:StateForEst} was computed for typical sample paths $\{A_t = P_{\zeta_t}^\transpose : 1\le t \le 2016\}$, where the value $2016$ corresponds to one week,  and $\bfzeta$ was scaled to lie between $\pm \half$.  Its rank was found to be approximately 40,  while the maximal rank is 96 (the dimension of the state).   With $\zeta_t\equiv 0$ the system is time-invariant.  In this case the rank of the observability Grammian coincides with the rank of the observability matrix, which was found to be  23.

\spm{new:}
However, these values were obtained using the ``rank'' command in Matlab,  which relies on finite numerical precision.  A plot of the magnitude of the eigenvalues for the two observability Grammians shown in   \Fig{f:obsGramm} suggests that both matrices are full rank.   

\spm{new:}
Further analysis establishes that the LTI model obtained with $\bfzeta\equiv 0$ \textit{cannot be observable}, due to a particular symmetry found in this example.  A general result given in  \Prop{t:symNot} implies that $\lambda^0_i=0$ for $50\le i\le 96$.


\section{Kalman Filter Equations}
\label{s:Kalman}

The second order statistics for the disturbances appearing in the linear model (\ref{e:StateForEst}, \ref{e:StateForEst-i}, \ref{e:obsPhi}) are derived here.   These expressions are used
to construct a Kalman filter that generates approximations for the conditional mean and covariance
\eqref{e:meanW-e:SigPhi}.
Other statistics of interest are,
\[
\begin{aligned}
\haPhi_t^i &=   \Expect[\Phi^i_t | \clY_t] \,, \quad
 \Sigma^i_t    =  \Expect[\tilPhi^i_t(\tilPhi^i_t)^\transpose | \clY_t] 
\\[.1cm] 
\haPhi_{t+1\mid t}^i &=   \Expect[\Phi^i_{t+1} | \clY_t] \,, \quad
 \Sigma^i_{t+1\mid t}    =  \Expect[\tilPhi^i_{t+1}(\tilPhi^i_{t+1})^\transpose | \clY_t] 
\\[.1cm] 
\haPhi_{t+1\mid t} &=   \Expect[\Phi_{t+1} | \clY_t]\,, \quad
  \Sigma_{t+1\mid t}    =  \Expect[\tilPhi_{t+1}(\tilPhi_{t+1})^\transpose | \clY_t] 
\end{aligned}
\]
where again tildes represent deviations,  such as $\tilPhi^i_t = \Phi^i_t -\haPhi^i_t$.

 \Prop{t:EstIndExchangeable} states that some statistics of the individual  can be expressed in terms of those of the population:
It is \textit{not} the case that  $\Sigma_t^i=N \Sigma_t $,
since   $\{\Phi^i_t : 1\le i\le N\}$ are correlated.

\begin{proposition}
\label{t:EstIndExchangeable}
For each $s$, $t$, $i$,  and any set $S\subset\Re^d$, the   conditional probability is independent of $i$:
 \begin{equation}
\Prob\{ \Phi^i_s\in S \mid \clY_t\} = \Prob\{ \Phi^1_s\in S \mid \clY_t\} 
\label{e:EstIndExchangeable}
\end{equation}
and consequently,  $
\Expect[ \Phi^i_s \mid \clY_t]  = \Expect[ \Phi_s \mid \clY_t] $. 
Moreover,   
$\haPhi_{t+1\mid t}^i  = A_t \haPhi_t$,  the state covariances for the individual are
\[
\begin{aligned}
 \Sigma_{t}^i & = \diag(\haPhi_t) - \haPhi_t \haPhi_t^\transpose
\\
 \Sigma_{t+1\mid t}^i & = \diag(\haPhi_{t+1\mid t}) - \haPhi_{t+1\mid t} \haPhi_{t+1\mid t}^\transpose,
\end{aligned}
 \]
  and the cross covariances can be expressed, 
\[
\begin{aligned}
 \Expect\bigl[     \tilPhi^i_t   ( \tilPhi_t ) ^\transpose \mid \clY_t\bigr]  &= \Sigma_t
 \\
 \Expect\bigl[     \tilPhi^i_{t+1\mid t}   ( \tilPhi_{t+1\mid t} ) ^\transpose \mid \clY_t\bigr]  &= \Sigma_{t+1\mid t}   \,, \quad 1\le i\le N.
\end{aligned}
 \]
\end{proposition}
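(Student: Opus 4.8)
The plan is to reduce every assertion to a single structural fact — under A1--A4 the family $\{\Phi^i_\cdot : 1\le i\le N\}$ is exchangeable in the load index $i$, jointly with the observation process — together with short conditional-expectation computations. First I would establish the exchangeability. By A2 and A3 the driving noises $\{\xi^i_t\}$ and the initial conditions $\{\piload^i_0\}$ are i.i.d.\ across $i$, and by A4 the sampling vector $\bfgamma$ has a permutation-invariant law and is independent of the $\{\xi^i_t\}$; since $Y_t=\sum_i\gamma_t(i)\util(X^i_t)$ is a symmetric function of the pairs $\{(X^i_t,\gamma_t(i))\}$ and $\bfzeta$ is, by A1, a causal function of $\bfmY$ alone, the entire construction commutes with permutations of $i$. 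Conditioning on $\clY_t$ — which is generated by the symmetric functionals $Y_r,\zeta_r$, $r\le t$ — then preserves exchangeability of $(\Phi^1_s,\dots,\Phi^N_s)$, which is \eqref{e:EstIndExchangeable}. Averaging over $i$ and using \eqref{e:Phi-sum}, $\Expect[\Phi^i_s\mid\clY_t]=\frac1N\sum_j\Expect[\Phi^j_s\mid\clY_t]=\Expect[\Phi_s\mid\clY_t]$; in particular $\haPhi^i_t=\haPhi_t$ and $\haPhi^i_{t+1\mid t}=\haPhi_{t+1\mid t}$.

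Next I would compute the one-step prediction. Taking $\Expect[\,\cdot\mid\clY_t]$ in \eqref{e:StateForEst-i} and using that $A_t=P_{\zeta_t}^\transpose$ is $\clY_t$-measurable, $\haPhi^i_{t+1\mid t}=A_t\haPhi^i_t+\Expect[W^i_{t+1}\mid\clY_t]$; the last term is zero by the martingale-difference property of $\bfmW^i$ (conditioning first on $\piload^i_0,\dots,\piload^i_t$ and invoking \eqref{e:EG=P}). With $\haPhi^i_t=\haPhi_t$ this gives $\haPhi^i_{t+1\mid t}=A_t\haPhi_t$.

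For the individual covariances I would exploit that $\Phi^i_t=\piload^i_t=\delta_{X^i_t}$ is a vertex of $\Spx$, i.e.\ a standard basis vector, so that $\Phi^i_t(\Phi^i_t)^\transpose=\diag(\Phi^i_t)$ as an identity of random matrices — and likewise $\Phi^i_{t+1}(\Phi^i_{t+1})^\transpose=\diag(\Phi^i_{t+1})$. Taking $\Expect[\,\cdot\mid\clY_t]$ and using $\clY_t$-measurability of $\haPhi_t$ together with $\haPhi^i_t=\haPhi_t$ gives $\Sigma^i_t=\diag(\haPhi_t)-\haPhi_t\haPhi_t^\transpose$; the same calculation with $\Phi^i_{t+1}$ and $\haPhi_{t+1\mid t}$ gives $\Sigma^i_{t+1\mid t}=\diag(\haPhi_{t+1\mid t})-\haPhi_{t+1\mid t}\haPhi_{t+1\mid t}^\transpose$.

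Finally, the cross covariances follow from exchangeability alone, with no appeal to conditional independence of distinct loads. Writing $\tilPhi_t=\frac1N\sum_j\tilPhi^j_t$ (valid since $\haPhi^j_t=\haPhi_t$) and setting $M_{ij}\eqdef\Expect[\tilPhi^i_t(\tilPhi^j_t)^\transpose\mid\clY_t]$, the conditional exchangeability of $(\Phi^1_t,\dots,\Phi^N_t)$ given $\clY_t$ forces $M_{ij}$ to depend only on whether $i=j$; hence $\frac1N\sum_j M_{ij}$ is the same for every $i$ and equals $\frac1{N^2}\sum_{i,j}M_{ij}=\Expect[\tilPhi_t\tilPhi_t^\transpose\mid\clY_t]=\Sigma_t$, which is $\Expect[\tilPhi^i_t(\tilPhi_t)^\transpose\mid\clY_t]=\Sigma_t$. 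The identical averaging argument at the prediction step (using $\haPhi^j_{t+1\mid t}=\haPhi_{t+1\mid t}$) yields $\Expect[\tilPhi^i_{t+1\mid t}(\tilPhi_{t+1\mid t})^\transpose\mid\clY_t]=\Sigma_{t+1\mid t}$. The step I expect to demand the most care is the exchangeability claim itself — checking that closing the feedback loop through the observation-dependent input $\bfzeta$ and the random sampling of A4 does not break symmetry in the load index; once that is in place, each remaining identity is a one-line computation.
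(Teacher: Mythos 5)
Your argument is correct and follows essentially the same route as the paper's (very terse) proof: exchangeability of the loads jointly with $(\bfmY,\bfzeta)$ under A1--A4 gives \eqref{e:EstIndExchangeable}, and the remaining identities follow from that symmetry together with the binary-entry fact $\Phi^i_t(\Phi^i_t)^\transpose=\diag(\Phi^i_t)$, plus the martingale-difference property for the one-step prediction. Your write-up simply fills in the details (the permutation-equivariance of the closed loop and the $M_{ij}$ averaging step) that the paper leaves implicit.
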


\begin{proof}
The proof of \eqref{e:EstIndExchangeable}
 follows from the symmetry and independence conditions imposed in (A1--A4).   The remaining results follow from this,  and the fact that $\Phi^i_t$ has binary entries [in particular,  $\Phi^i_t(\Phi^i_t)^\transpose = \diag(\Phi^i_t)$].
\end{proof}

Recall from the introduction that two formulations of the Kalman filter 
have been considered in this research. 
For a conditionally Gaussian model,  the Kalman filter equations require the conditional covariances for the state noise,
\begin{equation}
\!\!\!\!
\Sigma^{W^i}_t=\Expect[W^i_{t+1}(W_{t+1}^i)^\transpose \mid \clY_t] \,,
\ \
\Sigma^W_t=\Expect[W_{t+1}W_{t+1}^\transpose \mid \clY_t]
\label{e:NoiseCondCov}
\end{equation}
and also the conditional covariance of the measurement noise,
\begin{equation}
\Sigma^V_t=\Expect[V_t^2 \mid \clY_{t-1}] 
\label{e:NoiseObsCondCov}
\end{equation}
Formulae for the state noise covariances can be obtained in full generality.  We require the distribution of the random vector $\gamma_t$ introduced in A4 to obtain a formula for $\Sigma^V_t$.  

The Kalman filter that generates $L_2$-optimal estimates over all \textit{linear functions of the observations} uses instead the (unconditional) covariance matrices,
\begin{equation}
\barSigma^{W^i}_t=\Expect[W^i_{t+1}(W_{t+1}^i)^\transpose],
\quad
\barSigma^W_t=\Expect[W_{t+1}W_{t+1}^\transpose ],
\label{e:NoisCov}
\end{equation}
and $
\barSigma^{V}_t=\Expect[V_t^2]$  (the  notation used in standard texts is $Q_t =\barSigma^W_t $ and $R_t=\barSigma^{V}_t$    \cite{cai88}).   
We show in \Prop{t:NoiseCondCov} that the two covariance matrices in \eqref{e:NoiseCondCov} are   linear functions of the \textit{true conditional mean} $\haPhi_t$.   Expressions for the two covariance matrices in \eqref{e:NoisCov} follow from \Prop{t:NoiseCondCov} and the smoothing property of conditional expectation, provided we can compute $\Expect[\haPhi_t]=\Expect[\Phi_t]$.
The formula we obtain for $\Sigma^{V}_t$ in \Prop{t:NoiseObsCondCov}  is a linear function of the conditional covariance matrices $ \Sigma^i_{t+1\mid t}   $ and $  \Sigma_{t+1\mid t}    $.   It is unlikely we can obtain formula for the means of these covariance matrices, and hence we do not expect to obtain an exact formula for $\barSigma^{V}_t$.


\subsection{State noise covariance}
\label{s:KFstate}

 The following result provides formulae for the conditional covariances for the state noise \eqref{e:NoiseCondCov} as a function of the conditional mean $\haPhi_t$. 
\begin{proposition}
\label{t:NoiseCondCov}
Under Assumptions~A1--A4,
\begin{eqnarray} 
\Sigma^W_t &=&
 \frac{1}{N}\Bigl( \diag(A_t \haPhi_t) - A_t \diag(\haPhi_t) A_t^\transpose \Bigr)
 \label{e:NoiseCondCovForm}
 \\
\Sigma^{W^i}_t &=&
 \diag(A_t \haPhi_t^i) - A_t \diag(\haPhi_t^i) A_t^\transpose    
\label{e:NoiseCondCovForm-i}
\end{eqnarray}
The second covariance is independent of $i$,  with common value $\Sigma^{W^i}_t = N\Sigma^W_t$.
\end{proposition}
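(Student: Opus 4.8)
The plan is to compute the conditional second moment of $W^i_{t+1}$ first with respect to a \emph{richer} filtration $\clF_t$ — say the one generated by $\{\piload^i_r,\zeta_r : 1\le i\le N,\ r\le t\}$, under which all the individual states up to time $t$ are measurable — and then project down to $\clY_t$ using the smoothing property; the population covariance $\Sigma^W_t$ will follow from the cross terms vanishing (by the i.i.d.\ structure of the innovations $\{\xi^i_{t+1}\}$ across $i$ in A2) together with the exchangeability identity $\haPhi^i_t=\haPhi_t$ of \Prop{t:EstIndExchangeable}.

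First I would rewrite \eqref{e:Wi} in a more convenient form. Transposing the recursion $\piload^i_{t+1}=\piload^i_tG^i_{t+1}$ of \eqref{e:piG} gives $\Phi^i_{t+1}=(G^i_{t+1})^\transpose\Phi^i_t$, so $W^i_{t+1}=\Phi^i_{t+1}-A_t\Phi^i_t$ with $A_t=P^\transpose_{\zeta_t}$. Since $\xi^i_{t+1}$ is independent of $\clF_t$ and $\zeta_t$ is $\clF_t$-measurable, Assumption~A2 together with \eqref{e:Pzeta} yields $\Expect[G^i_{t+1}\mid\clF_t]=P_{\zeta_t}$, hence $\Expect[W^i_{t+1}\mid\clF_t]=0$ and $\Expect[\Phi^i_{t+1}\mid\clF_t]=A_t\Phi^i_t$. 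Expanding the outer product and using that $\Phi^i_t$ is $\clF_t$-measurable, the cross terms cancel and
\[
\Expect[W^i_{t+1}(W^i_{t+1})^\transpose\mid\clF_t]
= \Expect[\Phi^i_{t+1}(\Phi^i_{t+1})^\transpose\mid\clF_t] - A_t\,\Phi^i_t(\Phi^i_t)^\transpose\,A_t^\transpose .
\]
Now the binary one-hot structure of $\Phi^i_s$ — precisely the fact already invoked in the proof of \Prop{t:EstIndExchangeable}, namely $\Phi^i_s(\Phi^i_s)^\transpose=\diag(\Phi^i_s)$ — lets me replace $\Expect[\Phi^i_{t+1}(\Phi^i_{t+1})^\transpose\mid\clF_t]=\diag(\Expect[\Phi^i_{t+1}\mid\clF_t])=\diag(A_t\Phi^i_t)$ and $\Phi^i_t(\Phi^i_t)^\transpose=\diag(\Phi^i_t)$, so the right-hand side becomes $\diag(A_t\Phi^i_t)-A_t\diag(\Phi^i_t)A_t^\transpose$. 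Applying $\Expect[\,\cdot\mid\clY_t]$, noting that $A_t$ is $\clY_t$-measurable and $\diag(\cdot)$ is linear, gives \eqref{e:NoiseCondCovForm-i} with $\haPhi^i_t=\Expect[\Phi^i_t\mid\clY_t]$. For the population noise I would expand $\Sigma^W_t=N^{-2}\sum_{k,l}\Expect[W^k_{t+1}(W^l_{t+1})^\transpose\mid\clY_t]$; for $k\neq l$, conditioning on $\clF_t$ and using that $\xi^k_{t+1},\xi^l_{t+1}$ are independent given $\clF_t$ while $\Phi^k_t,\Phi^l_t$ are $\clF_t$-measurable, the term factors as $\Expect[W^k_{t+1}\mid\clF_t]\,\Expect[W^l_{t+1}\mid\clF_t]^\transpose=0$. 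Hence only the $N$ diagonal terms survive, each equal to $\Sigma^{W^i}_t$, so $\Sigma^W_t=N^{-1}\Sigma^{W^i}_t$; invoking $\haPhi^i_t=\haPhi_t$ from \Prop{t:EstIndExchangeable} turns \eqref{e:NoiseCondCovForm-i} into \eqref{e:NoiseCondCovForm}, shows $\Sigma^{W^i}_t$ is independent of $i$, and confirms $\Sigma^{W^i}_t=N\Sigma^W_t$.

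The main obstacle is not any hard estimate but care with filtrations: the conditional-mean-zero property and the vanishing of the $k\neq l$ cross terms are natural only with respect to $\clF_t$ (which makes the unobserved states $\Phi^i_t$ measurable), so one must verify that the smoothing step down to $\clY_t$ is legitimate — it is, precisely because every surviving expression is affine in $\Phi^i_t$ with $\clY_t$-measurable coefficients $A_t$, so no term depending on the unobserved states can get stuck. A secondary point worth stating cleanly is the consistency of A2 with \eqref{e:Pzeta}–\eqref{e:EG=P}, i.e.\ that averaging $\Xi(\zeta,\,\cdot\,)$ over the uniform law on $[0,1]^d$ returns $P_\zeta$, which is what underlies $\Expect[G^i_{t+1}\mid\clF_t]=P_{\zeta_t}$.
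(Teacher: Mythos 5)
Your argument is essentially the paper's: the paper likewise computes the second moment of $W^i_{t+1}$ by conditioning on an enlarged sigma-field under which the individual states are measurable, uses $\Expect[\Phi^i_{t+1}\mid\cdot\,]=A_t\Phi^i_t$ together with the one-hot identity $\Phi^i_s(\Phi^i_s)^\transpose=\diag(\Phi^i_s)$ to arrive at $\diag(A_t\Phi^i_t)-A_t\diag(\Phi^i_t)A_t^\transpose$, and then projects onto $\clY_t$; the population formula comes from uncorrelatedness across loads plus $\haPhi^i_t=\haPhi_t$ from \Prop{t:EstIndExchangeable}. Your explicit check that the $k\neq l$ cross terms vanish (conditional independence of $\xi^k_{t+1},\xi^l_{t+1}$ given the enlarged field) is a bit more detailed than the paper, which simply appeals to the uncorrelatedness of the $\{W^i\}$ established around \Prop{t:PhiLin}.

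One technical slip to repair: the smoothing step $\Expect[\Expect[\,\cdot\mid\clF_t]\mid\clY_t]=\Expect[\,\cdot\mid\clY_t]$ requires $\clY_t\subseteq\clF_t$, but your $\clF_t=\sigma\{\piload^i_r,\zeta_r : i\le N,\ r\le t\}$ does not contain the observations: $Y_r=\sum_i\gamma_r(i)\util(X^i_r)$ involves the random sampling weights $\gamma_r$, which are not functions of the load states, so $Y_r$ is not $\clF_t$-measurable (A1 makes $\zeta_r$ a function of the $Y$'s, not conversely). Your stated justification --- that the surviving expression is affine in $\Phi^i_t$ with $\clY_t$-measurable coefficients --- does not address this nesting requirement. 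The fix is exactly the paper's choice of enlarged field, $\clY_t^+=\sigma\{\Phi_r^i, Y_r,\zeta_r, A_r : r\le t,\ i\le N\}$: because A4 makes $\{\gamma_r\}$ independent of the $\{\xi^i_r\}$, adjoining the $Y_r$'s does not alter the inner conditional expectations (one still has $\Expect[G^i_{t+1}\mid\clY_t^+]=P_{\zeta_t}$, hence the same formula for the conditional second moment), and with $\clY_t\subseteq\clY_t^+$ the tower property is legitimate. With that one-line change your proof is complete and coincides with the paper's.
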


\medbreak

\begin{proof}
Since
$\{ W_t^i  : 1\le i\le N\}$ is uncorrelated, 
\begin{equation}
\Sigma^W_t=
 \frac{1}{N^2}\sum_{i=1}^N \Sigma^{W^i}_t 
 \label{e:SigmaWavg}
\end{equation}
Moreover, \Prop{t:EstIndExchangeable} gives  $\haPhi_t^i=\haPhi_t$, and from this or  \eqref{e:Phi-sum} it is obvious that
\[
\haPhi_t = \frac{1}{N} \sum_{i=1}^N \haPhi_t^i 
\]
Consequently,  \eqref{e:NoiseCondCovForm} follows from  \eqref{e:NoiseCondCovForm-i}.

The derivation of the formula  \eqref{e:NoiseCondCovForm-i} for $\Sigma^{W^i}_t $ is similar to the Kalman filter construction in \cite{lipkrirub84}.
Given the larger sigma-field,  
\[
\clY_t^+ = \sigma\{\Phi_r^i, Y_r, \zeta_r,  A_r : r\le t,\ i\le N\}
\]
the smoothing property of conditional expectation implies,
\[
\Sigma^{W^i}_t = \Expect\bigl[ \Expect[W_{t+1}^i (W_{t+1}^i)^\transpose \mid \clY_t^+] \mid \clY_t\bigr]
\] 
The inner conditional expectation is transformed using the definition  \eqref{e:Wi}:
\[
\begin{aligned}
\Expect[W^i_{t+1}(W_{t+1}^i)^\transpose &\mid \clY_t^+] 
\\
&=  
\Expect[  \Phi_{t+1}^i (\Phi_{t+1}^i)^\transpose   \mid \clY_t^+]  
\\
&\qquad -  \Expect[  \Phi_{t+1}^i     \mid \clY_t^+]   \Expect[  \Phi_{t+1}^i     \mid \clY_t^+]^\transpose
\\
&=  \diag(A_t \Phi_t^i)    
\\
&\qquad -  A_t \diag( \Phi_t^i) A_t^\transpose
\end{aligned}
\]
where the final equation uses
$ \Expect[  \Phi_{t+1}^i     \mid \clY_t^+] =A_t \Phi_t^i $, and the fact that $\Phi_r^i$ has binary entries for each $i$ and $r$.

Taking the conditional expectation given $\clY_t$ gives \eqref{e:NoiseCondCovForm-i}. 
\end{proof}

\subsection{Sampling and observation covariance}
\label{s:samplingCov}

The observation model used in the numerical experiments that follow is based on random sampling of loads:  An integer $n<N$ is held fixed, and at each time instant $t$ a distinct set of  $n$ indices $\{k_1,\dots, k_n\}$ is chosen uniformly at random.   The observation is the average 
\begin{equation}
Y_t= \frac{1}{n} \sum_{i=1}^n   \util(X^{ k_i}_t).
\label{e:samplingCov}
\end{equation}
   Recalling the definition $ \util(X^{ k_i}_t) = C\Phi^{k_i}_t $,  it follows from \Prop{t:EstIndExchangeable}  that this is an unbiased estimate of   $C\Phi_t $.  
An expression for the conditional variance $\Sigma_t^V$ defined in \eqref{e:NoiseObsCondCov} is obtained next.

\begin{proposition}
\label{t:NoiseObsCondCov}  
The conditional covariance is given by,
\[
\Sigma_{t+1}^V = \frac{1}{n}\frac{N-n}{N-1}  C \Bigl(  \Sigma^i_{t+1\mid t}    -  \Sigma_{t+1\mid t}    
\Bigr) C^\transpose 
\]
\end{proposition}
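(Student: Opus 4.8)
The plan is to evaluate $\Sigma^V_{t+1}=\Expect[V_{t+1}^2\mid\clY_t]$ by a single application of the tower property, conditioning first on the population configuration at time $t+1$ and only afterwards on $\clY_t$. Write $u_i=\util(X^i_{t+1})=C\Phi^i_{t+1}$ for the individual consumptions at time $t+1$ and $\bar u=C\Phi_{t+1}=\frac1N\sum_{i=1}^N u_i$ for the true population average. By \eqref{e:samplingCov} the observation $Y_{t+1}$ is the mean of a simple random sample of size $n$ drawn \emph{without replacement} from $u_1,\dots,u_N$, and $V_{t+1}=Y_{t+1}-\bar u$ by \eqref{e:obsPhi}.

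First I would introduce the $\sigma$-field $\clH_{t+1}=\clY_t\vee\sigma\{\Phi^i_{t+1}:1\le i\le N\}$, which reveals every $u_i$ but not the random sample drawn at time $t+1$. Under A4 the sampling mechanism at time $t+1$ is independent of $\clH_{t+1}$, so conditionally on $\clH_{t+1}$ the only randomness in $V_{t+1}$ is the choice of the $n$ indices; by exchangeability of that choice $\Expect[V_{t+1}\mid\clH_{t+1}]=0$, and the classical variance formula for the mean of a simple random sample without replacement gives
\[
\Expect[V_{t+1}^2\mid\clH_{t+1}]=\frac1n\,\frac{N-n}{N-1}\cdot\frac1N\sum_{i=1}^N(u_i-\bar u)^2 .
\]
Since $(u_i-\bar u)^2=C(\Phi^i_{t+1}-\Phi_{t+1})(\Phi^i_{t+1}-\Phi_{t+1})^\transpose C^\transpose$, taking $\Expect[\,\cdot\mid\clY_t]$ and exchanging the finite sum with the expectation reduces the problem to evaluating $\Expect[(\Phi^i_{t+1}-\Phi_{t+1})(\Phi^i_{t+1}-\Phi_{t+1})^\transpose\mid\clY_t]$ for each $i$.

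For that step I would recenter. By \Prop{t:EstIndExchangeable}, $\haPhi^i_{t+1\mid t}=A_t\haPhi_t=\haPhi_{t+1\mid t}$, so $\Phi^i_{t+1}$ and $\Phi_{t+1}$ share the same $\clY_t$-predictor and $\Phi^i_{t+1}-\Phi_{t+1}=\tilPhi^i_{t+1\mid t}-\tilPhi_{t+1\mid t}$. Expanding the outer product and invoking the (cross-)covariance identities of \Prop{t:EstIndExchangeable} — namely $\Expect[\tilPhi^i_{t+1\mid t}(\tilPhi^i_{t+1\mid t})^\transpose\mid\clY_t]=\Sigma^i_{t+1\mid t}$, $\Expect[\tilPhi_{t+1\mid t}(\tilPhi_{t+1\mid t})^\transpose\mid\clY_t]=\Sigma_{t+1\mid t}$, and $\Expect[\tilPhi^i_{t+1\mid t}(\tilPhi_{t+1\mid t})^\transpose\mid\clY_t]=\Sigma_{t+1\mid t}$ together with its transpose — collapses the four terms to $\Sigma^i_{t+1\mid t}-\Sigma_{t+1\mid t}$, using symmetry of $\Sigma_{t+1\mid t}$. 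Since \Prop{t:EstIndExchangeable} also shows $\Sigma^i_{t+1\mid t}=\diag(\haPhi_{t+1\mid t})-\haPhi_{t+1\mid t}\haPhi_{t+1\mid t}^\transpose$ is independent of $i$, the average $\frac1N\sum_{i=1}^N$ is vacuous, and pre- and post-multiplying by $C$ and $C^\transpose$ delivers the stated formula.

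The one delicate point is the first step: asserting that, conditioned on the population state at time $t+1$, the observation noise $V_{t+1}$ is precisely the sampling error of simple random sampling without replacement, with the textbook conditional variance. This rests entirely on A4 — that the sampling vector has exchangeable coordinates and is independent of the driving noises $\{\xi^i_t\}$ — after which the variance computation is a standard combinatorial identity. Care is needed only in bookkeeping the normalization, since the usual population variance $S^2=\frac1{N-1}\sum_i(u_i-\bar u)^2$ combines with the finite-population correction $\frac{N-n}{N}$ to produce exactly $\frac1n\frac{N-n}{N-1}\cdot\frac1N\sum_i(u_i-\bar u)^2$; everything after that is routine bookkeeping with \Prop{t:EstIndExchangeable}.
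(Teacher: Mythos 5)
Your proposal is correct, and it reaches the stated formula by a route that is organized differently from the paper's. The paper works entirely conditionally on $\clY_t$: it centers the sampled average at $\haPhi_{t+1\mid t}$, expands the quadratic into three terms $S_1-2S_2+S_3$, and then does the index combinatorics by hand — in particular it derives the pairwise cross-term $(N-1)\Expect[\{C\tilPhi^{1}_{t+1\mid t}\}\{C\tilPhi^{2}_{t+1\mid t}\}\mid\clY_t] = NC\Sigma_{t+1\mid t}C^\transpose - C\Sigma^i_{t+1\mid t}C^\transpose$ from exchangeability and \Prop{t:EstIndExchangeable}, which is where the factor $\frac{N-n}{N-1}$ ultimately emerges after collecting terms. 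You instead condition in two stages: first on the realized population $\{\Phi^i_{t+1}\}$, so that $V_{t+1}$ becomes exactly the error of a simple random sample mean without replacement and the textbook finite-population variance formula applies, and then on $\clY_t$, where the recentering $\Phi^i_{t+1}-\Phi_{t+1}=\tilPhi^i_{t+1\mid t}-\tilPhi_{t+1\mid t}$ together with the covariance and cross-covariance identities of \Prop{t:EstIndExchangeable} reduces $\Expect\bigl[\frac1N\sum_i(u_i-\bar u)^2\mid\clY_t\bigr]$ to $C(\Sigma^i_{t+1\mid t}-\Sigma_{t+1\mid t})C^\transpose$. The two arguments use the same ingredients (the A4 exchangeability/independence of the sampling mechanism and \Prop{t:EstIndExchangeable}); yours buys transparency — the correction factor $\frac1n\frac{N-n}{N-1}$ appears immediately from survey sampling rather than from term-collecting, and the double-sum bookkeeping disappears — while the paper's version is self-contained, re-deriving the pairwise sampling covariance internally rather than citing the classical formula. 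One point to keep honest in your first step (the paper glosses it the same way): independence of the time-$(t+1)$ sampling vector from $\clH_{t+1}$ needs the sampling to be independent across time as well as independent of the $\{\xi^i_t\}$, since $\clH_{t+1}$ contains past observations that depend on past sampling; this is how the paper uses A4 as well ("the permutation is random, uniform, and independent of $\clY_t$"), so it is a shared hypothesis rather than a gap in your argument.
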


\medbreak

\begin{proof}
By definition,
\[
\Sigma^V_{t+1} = \Expect\Bigl[  \Bigl( \frac{1}{n} \sum_{i=1}^n   C\Phi^{ k_i}_{t+1} -  C\Phi_{t+1} \Bigr)^2 \mid \clY_t\Bigr]
\]
where the permutation is random, uniform, and independent of $\clY_t$.  It follows from symmetry of the model that we can consider just the first $n$ samples,
replacing $k_i$ by $i$.    We also center the random variables to obtain
\begin{equation} 
\Sigma^V_{t+1} = \Expect\Bigl[  \Bigl( \frac{1}{n}   \sum_{i=1}^n  C\tilPhi^{i}_{t+1\mid t} -  C\tilPhi_{t+1\mid t} \Bigr)^2 \mid \clY_t\Bigr]
\label{e:SigmaVdef} 
\end{equation}
The quadratic is expanded as the sum of three terms, 
\begin{equation}
\Sigma^V_{t+1} = S_1 - 2S_2 + S_3\,
\label{e:SigmaV3}
\end{equation}
in which the second two terms are transformed using the following consequence of \Prop{t:EstIndExchangeable}:
 \begin{equation} 
 \Expect\bigl[     \{ C\tilPhi^i_{t+1\mid t} \}  \{ C\tilPhi_{t+1\mid t} \}  \mid \clY_t\bigr] 
  = C\Sigma_{t+1\mid t} C^\transpose  
\label{e:crossPhi}
\end{equation}

The first term  is the conditional variance,
\begin{equation}
S_1 \eqdef
\Expect\bigl[  \bigl(   C\tilPhi_{t+1\mid t} \bigr)^2 \mid \clY_t\bigr] = C \Sigma_{t+1\mid t} C^\transpose
\label{e:SigmaVa}
\end{equation}
The second term is transformed using \eqref{e:crossPhi}:
\begin{equation} 
\begin{aligned}
S_2 \eqdef
 \Expect\Bigl[  \Bigl( \frac{1}{n}   &\sum_{i=1}^n  C\tilPhi^{i}_{t+1\mid t} \Bigr)\Bigl( C\tilPhi_{t+1\mid t} \Bigr) \mid \clY_t\Bigr]
 \\
  &=   C \Sigma_{t+1\mid t} C^\transpose
\end{aligned}
\label{e:SigmaVb}
\end{equation}
The third term is another conditional variance,
\[
\begin{aligned}
S_3 \eqdef
 \frac{1}{n^2}  
\Expect\Bigl[  & \Bigl( \sum_{i=1}^n  C\tilPhi^{i}_{t+1\mid t}   \Bigr)^2 \mid \clY_t\Bigr]
\\
  &=  \frac{1}{n^2}   \sum_{i,j=1}^n \Expect\bigl[      \{ C\tilPhi^{i}_{t+1\mid t} \}  \{ C\tilPhi^{j}_{t+1\mid t} \}    \mid \clY_t\bigr]
\end{aligned}
\]
Applying symmetry once more gives, 
\[
\begin{aligned} 
  \sum_{i,j=1}^n  \Expect\bigl[  &    \{ C\tilPhi^{i}_{t+1\mid t} \}  \{ C\tilPhi^{j}_{t+1\mid t} \}    \mid \clY_t\bigr]
  \\
  &= n  \Expect\bigl[      \{ C\tilPhi^1_{t+1\mid t} \}^2      \mid \clY_t\bigr] 
  \\
  &\quad + (n^2-n)  \Expect\bigl[     \{ C\tilPhi^{1}_{t+1\mid t} \}  \{ C\tilPhi^2_{t+1\mid t} \}    \mid \clY_t\bigr]
\end{aligned}
\]
Using familiar arguments, and applying \eqref{e:crossPhi},
\[
\begin{aligned}
 (N-1) &\Expect\bigl[     \{ C\tilPhi^{1}_{t+1\mid t} \}  \{ C\tilPhi^2_{t+1\mid t} \}    \mid \clY_t\bigr]
 \\
 &=
 \sum_{k=2}^N 
 \Expect\bigl[     \{ C\tilPhi^{1}_{t+1\mid t} \}  \{ C\tilPhi^{k}_{t+1\mid t} \}    \mid \clY_t\bigr]
 \\
  &= N  \Expect\bigl[     \{ C\tilPhi^{1}_{t+1\mid t} \}  \{ C\tilPhi_{t+1\mid t} \}   \mid \clY_t\bigr] 
  \\
  &\qquad   -   \Expect\bigl[     \{ C\tilPhi^{1}_{t+1\mid t} \}^2 \mid \clY_t\bigr]   
  \\
  & =
  N C\Sigma_{t+1\mid t} C^\transpose  - C\Sigma^i_{t+1\mid t} C^\transpose 
\end{aligned}
\]
Putting these expressions together gives,
\[
S_3 =
 \frac{1}{n^2}  C  \Bigl( n \Sigma^i_{t+1\mid t}   + \frac{n^2-n }{N-1} \Bigl[   N \Sigma_{t+1\mid t}   - \Sigma^i_{t+1\mid t} 
 \Bigr]
 \Bigr)C^\transpose
\]
The desired expression is obtained by substitution of these terms in \eqref{e:SigmaV3}.
\end{proof}

\subsection{Estimation of the individual }
 \label{s:EstInd}

A filter is introduced here to estimate the mean and variance for an individual state $\bfPhi^i$, and the mean and variance of the QoS metric $\Health^i_t$  for a typical load (recall   \eqref{e:QoS15}).
Estimates of the individual state and individual QoS can be used to estimate the flexibility of loads, which may vary with time.   For example, if the grid operator believes that every water heater contains water that is too cold,  then it is unlikely that there is much remaining flexibility to reduce power consumption from these loads.

%


The Kalman filter must be modified to obtain estimates of the QoS for individual loads.    One approach is to restrict to $\beta=1$,  and use the Markovian model  $(X^i_t,\Health^i_t)$.   This comes with high complexity:  In the case of the pool model,  the first component can take on $d=96$ values,   and the range of values of $\Health^i_t$ may be over 100.  This means that the state space is at least $10^4$,  which is probably  too large to be practical for estimation.

Here we introduce a Kalman filter for the joint process $\Psi^i_t=(\Phi^i_t;\Phi_t;\Health^i_t)$,  which is of dimension $2d+1$.    The construction of a linear model for $\bfPsi^i$ is based on (\ref{e:StateForEst},
\ref{e:StateForEst-i}),  and the one-dimensional dynamics for  QoS,
\[
\Health^i_{t+1} = \beta \Health^i_t  +  C_\health \Phi^i_t   
\]
where $ C_\health \Phi^i_t   =\health(X^i_t)$.
The measurements remain of the form \eqref{e:obs},  which is why it is necessary to include  $\Phi_t$ in the definition of $\Psi^i_t$.    
The new $A$, $C$, state-noise covariance matrices are,    
{\small
\[
\begin{bmatrix}
A_t & 0 & 0
\\
0&A_t & 0
\\
C_\health & 0 & \beta
\end{bmatrix}
\, ,
\quad 
[0,\, C, \, 0]
\, ,
\quad 
\begin{bmatrix}
\Sigma_t^{W^i} &N^{-1} \Sigma_t^{W^i} & 0
\\
N^{-1} \Sigma_t^{W^i} &\Sigma_t^{W} & 0
\\
0 & 0 & 0
\end{bmatrix}
\]}%
We thus have all of the system parameters needed to construct the Kalman filter.


\section{Estimating QoS and Closing the Loop}
\label{s:num}

We have conducted experiments in various settings, using both formulations of the Kalman filter, differentiated by the use of conditional or unconditional covariance matrices.   For estimating performance we obtained excellent results with the Kalman filter whose gain is based on the conditional covariance matrices.  The numerical results focus on the residential pool model, and closes with preliminary extension to TCL models.

The Kalman filter that is optimal over all linear estimators requires unconditional covariances. 
The state covariance matrix can be expressed as the mean of \eqref{e:NoiseCondCovForm}:
\[
\barSigma^W_t =
 \frac{1}{N}  \Expect\Bigl[ \diag(A_t \haPhi_t) - A_t \diag(\haPhi_t) A_t^\transpose \Bigr]
\]
The mean  $\Expect [\haPhi_t] = \Expect [\Phi_t] $ does not have a closed form expression, so we make two approximations.  First, we consider the mean with $\bfzeta\equiv 0$,  and second we let $t\to\infty$.  The covariance used in the filter is the resulting limit,
\begin{equation}
\barSigma^W_\infty =
 \frac{1}{N}   \Bigl[ \diag(A_t \pi_0) - A_t \diag(\pi_0) A_t^\transpose \Bigr]
\label{e:ssVarW}
\end{equation}
where $\pi_0$ is invariant for $P_0$.   A similar approximation is used for $\barSigma^V_t $.    

 
The common features in all of the numerical experiments that follow are listed here:   
The reference signal $\bfmr$ was 
generated from the BPA balancing reserves deployed in the first month of 2015 \cite{BPA}. This  signal was low pass filtered, and then scaled to an appropriate magnitude to match the capacity of the aggregate of loads, exactly as in \cite{meybarbusyueehr15}.  

The observation model was based on the sampling assumptions introduced in   
\Section{s:samplingCov}, in which a fixed number of loads $n<N$ is sampled at each time.  
Recall from \eqref{e:samplingCov} that the average of these $n$ samples at time $t$ is denoted $Y_t$.   
Unless stated otherwise, $n$ is taken to be $0.1\%$ of the total population.  
Hence, for a population of $N=10,000$ pools, exactly 10 are sampled at each time step.   

It is assumed that each pool  pump consumes 1kW while in operation.   
The actual power consumption at time $t$ is $NC \Phi_t$.  For this reason, plots of  $C \Phi_t$  or $\haY_t = C \haPhi_t$ are scaled by   $N$ to represent total power consumption or its estimate. 

Finally, the discount factor $\beta$ in the QoS metric \eqref{e:QoS15} was chosen as the value for which $\beta^k\approx 1/2$ when $k$ corresponds to one week.  The value  $\beta = 0.9997$  is obtained under the assumption of five-minute sampling.

\subsection{Estimation of QoS}
\label{s:QoS}

The Kalman filter described in \Section{s:EstInd} was used to obtain estimates of the mean and variance of the QoS metric $\Health^i_t$,  with
 $\health(x) = C x - \bary$.   These experiments were performed with $\bfzeta$ equal to the exogenous ``mean-field limit'' used in the linearized model of  \cite{chebusmey14}: 
\[
\zeta_t = \frac{G_c}{1+G_cG_p} r_t
\]
where $G_c$ is a PI compensator,  and $G_p$ is a transfer function for a linearized model
(see \cite{chebusmey14} for details).

 \begin{figure}[h]
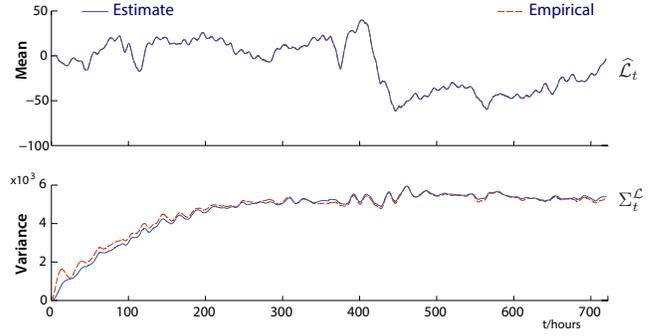

\Ebox{.95}{QoS_est_cc12_pool10000_sample10.pdf}   
\vspace{-2.5ex}
\caption{Although the liner model is not observable, estimates of the mean and variance of QoS for an individual are nearly perfect.}
\label{f:obsQoS}
\vspace{-1ex}
\end{figure}

\Fig{f:obsQoS} shows estimates of QoS based on a model with 10,000 pools, with $0.1\%$ sampling rate.       The accuracy of estimation is remarkable, in spite of significant measurement noise.   

%

Non-ideal settings were also considered. In one experiment   a population of 20,000 pools was divided into two classes of equal size: 10,000 pools operated as previously with a
12hr/day cleaning cycle, 
and the other 
10,000
  operated with a 8hr/day cleaning cycle.   It is possible to construct the exact Kalman filter by doubling the dimension of the state space.   Instead,   an approximate model was constructed in which the parameters in the two state space models (differentiated by cleaning cycle) were averaged;  similar approximate models are used in  \cite{matkoccal13}.  
 
 \spm{shortened:}
Estimation results for the case of two pool classes are shown in \Fig{f:obsQoS8+12}. 
The QoS for the two classes of pools have different means and variances.   This difference cannot be captured using this filter which is designed to estimate the overall mean and variance.    The filter generates estimates of the mean QoS  that lie between the two empirical means,  and estimates of its variance follow more closely   the empirical variance of the pools with 12hr/day cleaning cycle. 

\begin{figure}[h]
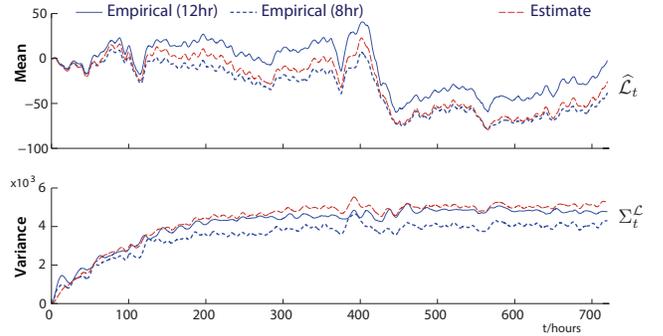

\Ebox{.95}{QoS_est_cc12n8_pool20000_sample20_cndCOV_1month_3plots-lr.pdf}   
\vspace{-2.5ex}
\caption{Estimates of the mean and variance of QoS for an individual with a heterogenous population.   The un-modeled dynamics lead to some error.}
\label{f:obsQoS8+12}
\vspace{-1ex}
\end{figure}

\subsection{Control of the heterogeneous population}
\label{s:hetero}

Closed-loop experiments were conducted
in which un-modeled dynamics are present due to both
load heterogeneity and additional local control.  With large un-modeled dynamics it is still possible to obtain estimates of the mean of QoS,  but the Kalman filter cannot be expected to provide estimates of the variance.  

Error feedback was used in all of the remaining experiments,  $\zeta_t = G_c e_t$.
A PI compensator was used to define $G_c$:  A proportional gain of 50, and integral gain of 1.5 worked well in all examples. 
\spm{new:}
These values were based on a nominal LTI model --- this feedback design resulted in a phase margin of about 70 degrees, and gain margin of about 10dB.  

In the experimental results illustrated in
 \Fig{f:YfbPureObs}, the error signal was defined with respect to the raw measurements, without use of the Kalman filter.  In all other experiments,   the following definition
 was used:   
\begin{equation}
e_t= \frac{1}{N} r_t - \widetilde{Y}_t, \quad \widetilde{Y}_t = C \haPhi_t - \bary
\label{e:error}
\end{equation}
with $\bary$ equal to the average nominal power consumption.
The scaling of the reference by $N^{-1}$ is required since $C \haPhi_t $ is an estimate of average power consumption.  

The total number of pools was taken to be $N=300,000$.   Performance results using smaller values of $N$ are   discussed at the close of \Section{s:hetero}, and also in \Section{s:opt-out}.

The simulation model used in the remaining experiments consisted of nine different classes of pools, distinguished by cleaning cycle.  
  The distribution of classes is shown in Table~I.   Suppose that the grid operator had full information regarding the number of pools in each class.  
The full state space description would have dimension $9\times d$, which is far too complex.  Moreover, in practice the grid operator will not have complete system information.

Here we make some coarse approximations to simplify the model.  In addition to reducing complexity of the filter, our goal is to investigate robustness of the estimation algorithms in a closed-loop setting.

\section*{{\sc Table I: Distribution of pools}}
\vspace{.25em}
\begin{center} 
{\small
 \begin{tabular}{lccccccccc} 
    \hline\vspace{-.5em}
    \\ 
\hspace{-.5em}
        \textbf{Cycle} \ (hr/day)  & 20 & 18 & 16 & 14 & 12 & 10 & 8 & 6 & 4 \\[1ex] 
        \hline
        \\
\hspace{-.75em}
         \textbf{Number} \ ($\times 10^4$) & 1  & 1  & 1  & 3  & 5  & 10 & 5 & 3 & 1  \\[1ex]   \hline 
    \end{tabular}
    }
\end{center}

\smallbreak


Although the simulation uses these nine classes of pools,  for the purposes of filter design an  approximate model was constructed based on just two pool models, corresponding to 8 and 12-hour cleaning cycles.   
Following  \cite{matkoccal13}, a convex combination was chosen for the state matrix, 
\[
A_t = \alpha A_t^8 + (1-\alpha) A_t^{12}
\]
 in which  $0 \le \alpha \le 1$ is independent of $t$, and
 the superscripts $^{8}$ and $^{12}$ refer to the respective pool classes.

The parameter $\alpha$  was chosen to be consistent with the hypothesis that the collection of pools consist of just two classes, with 8 or 12 hr/day cycles.  It is assumed that the nominal average power consumption $\bary$ is known to the grid operator -- this can be estimated easily from weekly measurements.  The 2-class   assumption would imply 
that $\bary = \alpha \bary^8 +(1- \alpha) \bary^{12} $, giving 
$
\alpha =  (\bary^{12}-\bary)/(\bary^{12}-\bary^8)$,
with   $\bary^{12}=1/2$ and $\bary^8=1/3$.

 \spm{Yue: note revision to simplify, and avoid unsubstantiated conjectures. }
 
The  covariance matrix $\Sigma^{W}_t$ used in the Kalman filter was modified to take into account the large un-modeled dynamics.  It was chosen as the sum of two terms,
\begin{equation}
\Sigma^{W}_t = \Sigma^{W_*}_t + k_t \barSigma^W_\infty,  
\label{e:Cov_convex}
\end{equation}
where $k_t\ge 0$,
$\Sigma^{W_*}_t$ is a convex combination of the matrices computed in \Proposition{t:NoiseCondCov}, and
$\barSigma^W_\infty$ is a convex combination of the matrices  \eqref{e:ssVarW}.  This structure ensures that $\Sigma^{W}_t $ is positive semi-definite, and that $\Sigma^{W}_t \One = 0$ (consistent with the fact that our state evolves in the simplex).     The performance of the filter was not very sensitive to the parameter $k_t$,  but the best value was found to grow linearly with population size.  The results that follow used $k_t = N/300$   (note that $N$   cancels the factor $1/N$ appearing in the definition  \eqref{e:ssVarW} of    $\barSigma^W_\infty $).

\begin{figure}[h]
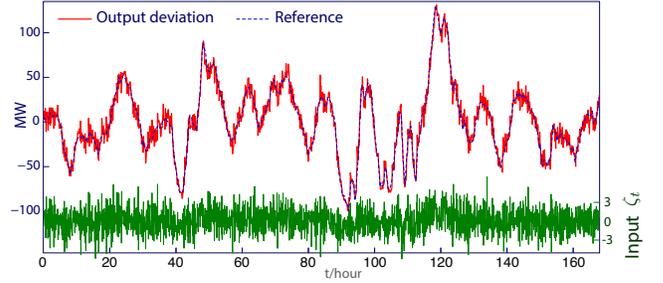

\Ebox{.95}{Convex_Track-PureObs-Nov2015-lr.pdf} 
\vspace{-2.5ex}
\caption{PI control with output-error feedback, with $e_t= {N^{-1}} r_t - Y_t$.}
\label{f:YfbPureObs}
\vspace{-1.25ex}
\end{figure} 
With the experimental setting fully described, we now survey results from several control experiments.

We first show what happens  if  we forgo use of the Kalman filter,
and take $e_t = N^{-1} r_t - \tilY_t$,  with $\tilY_t = Y_t - \bary$,   and  $Y_t$ the noisy measurements obtained with 0.1\%\ sampling.  Results from one experiment are shown in \Fig{f:YfbPureObs}. The volatility of the measurements resulted in similar volatility of input $\zeta_t$. This drove the pools to switch frequently, and also resulted in degraded grid level tracking.   This motivates the use of a filter to smooth the measurements.

Results using the filtered error \eqref{e:error} are shown in \Fig{f:YfbKF}. This Kalman filter estimator based control gives nearly perfect tracking performance, in spite of the significant modeling error.
 
\begin{figure}[h]
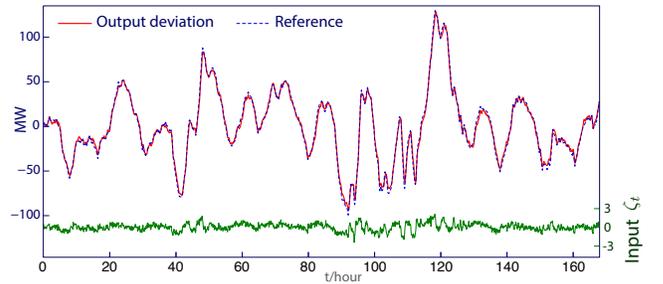

\Ebox{.95}{tracking_N=300000_samp=300-lr.pdf} 
\vspace{-1.25ex}
\caption{PI control with smoothed error signal \eqref{e:error}.}
\label{f:YfbKF}
\end{figure}


%
%
%

\def\nrm{{\text{\rm n}}}

Tracking experiments were conducted in many other  scenarios, distinguished by sampling rate and load population, using the same distribution of load classes given in Table 1.   Normalized tracking error was chosen as the performance metric, 
\[
e_t^\nrm = \frac{ y_t - r_t }{ \| r \|_2},    \quad \| r \|_2 = \left( \frac{1}{T} \sum_{t=1}^T r_t^2 \right)^{\frac{1}{2}}.
\]

\begin{figure}[h]
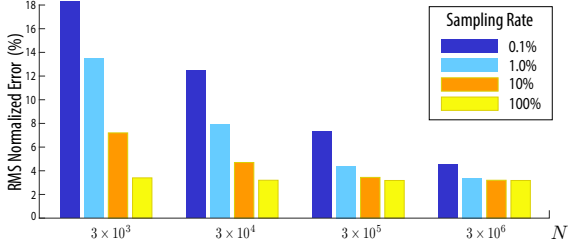

\Ebox{.85}{RMStrackingBarChart-lr.pdf}   
\vspace{-2.5ex}
\caption{Tracking performance improves with increased sampling rate, or a larger population of loads $N$.}
\label{f:bar}
\vspace{-1.25ex}
\end{figure}

 \begin{figure*}[t]
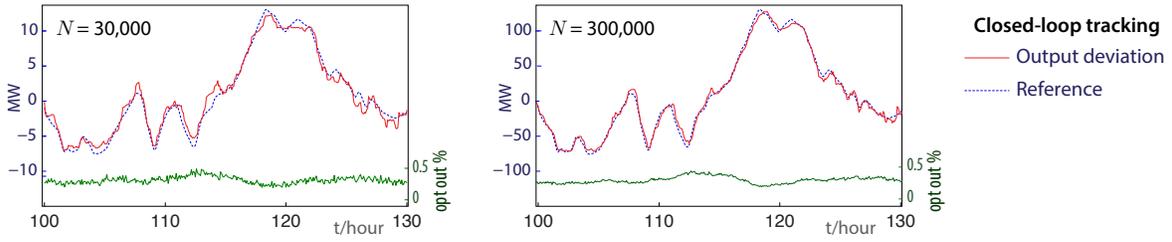

\Ebox{.85}{Heter_Convex_track-30K+300K-long-lr.pdf}
\vspace{-2.5ex}
\caption{PI control and local opt-out control for the heterogeneous model with 0.1\%\ sampling rate.
Tracking performance improves as $N$ is increased.}
\label{f:track30K+300K}
\vspace{-1.25ex}
\end{figure*}

The full relationship between population size $N$, sampling rate, and closed loop performance is illustrated  in \Fig{f:bar}.  The vertical axis is the root mean square (RMS) of the tracking error $\{e_t^\nrm\}$.
 For each of the 4 values of $N$ considered,   the tracking error is less than $3.5\%$ with $100\%$ measurements.  It is found that the tracking error is best predicted by the number of samples per time-step.  Consequently, a smaller population of loads requires a larger sampling rate to maintain good tracking performance.

\subsection{Opt-out local control}
\label{s:opt-out}

In \cite{chebusmey14} an additional layer of control is introduced, based on the QoS metric $\Health^i_t $ introduced in \eqref{e:QoS15}:  The $i$th load can opt-out of service to the grid (ignore the signal $\zeta_t$) at  any time $t$ for which $\Health^i_t $ lies outside of pre-assigned bounds.  The impact of these additional  un-modeled dynamics is investigated here.

At each load the QoS metric $\Health^i_t$ is constrained to a common interval $[-50, +50]$: 
if $\Health_t^i \ge 50$, then the pool is turned off at time $t$, regardless of the value of $\zeta_t$. Similarly, if $\Health_t^i\le -50$  then it will be turned on. 


In the construction of the Kalman filter, the  state noise covariance was taken as \eqref{e:Cov_convex},  but in this case the scaling factor was chosen as a function of the QoS estimates,
$
k_t = k + b |\widehat{\Health}^i_t|
$.
This is chosen to reflect the fact that un-modeled dynamics increase with the percentage of pools that opt-out.   
The values  $k=N/100$ and $b=300$ worked well, and the sensitivity to these values was not high.


 
Closed loop  performance remained nearly perfect,  and the percentage of opt out loads was observed to remain under $0.5\%$ over the entire run.   \Fig{f:track30K+300K} shows results from two experiments with $N=300,000$  and $N=30,000$,  each with sampling rate maintained at 0.1\%. 

  It is seen again that  reducing the number of loads results in degradation of closed loop performance, since fewer loads are sampled at each time-step;  the two plots are nearly identical if the sampling rate is increased to 1\% for the smaller population.

%
%


\subsection{Observability in Demand Dispatch systems}
\label{s:symNot}

It is shown here that the lack of observability observed in the pool model with 12 hr cleaning cycle can be attributed to symmetry of the model.

A  review of linear systems terminology is required.
Consider the general LTI state-space model,  
\[
\Phi_{t+1} = A \Phi_t + B u_t,\qquad y_t=C \Phi_t
\]
where $A$ is a $d\times d$ matrix, $B$ is a $d$-dimensional column vector,
and $C$ is a $d$-dimensional row vector.   The observability matrix is the $d\times d$ matrix whose $k$th row is equal to $C A^{k-1}$,  and its null space  $\Sigma_{\bar o}$ is called the \textit{unobservable subspace}.   Suppose that $\{\Phi_t\}$ and $\{\Phi_t'\}$   are two solutions to the state space equations with common input sequence $\{u_t\}$, but different initial conditions,  $\Phi_0, \Phi'_0\in\Re^d$.  If  $\Phi'_0-\Phi_0\in\Sigma_{\bar o}$,   then for each $t\ge 0$,
\[
y_t - y_t' = C A^t (\Phi_0 -\Phi_0') = 0
\]
That is, based on measurements $\{y_t : t\ge 0\}$,  it is impossible to know if the true state sequence is $\{\Phi_t\}$ or $\{\Phi_t'\}$.   

For the time-varying system 
\[
\Phi_{t+1} = A_t\Phi_t + B u_t,\qquad y_t=C \Phi_t
\]
the observability matrix is replaced by the observability Grammian, $\clO_G$.  The definition can be found in any introductory text on state space control (or, see the encyclopedic text \cite{cai88}).   A critical interpretation is this:  If $\Phi_0 $,  $\Phi_0'$ are two initial conditions, then with identical inputs 
\[
(\Phi_0 -\Phi_0') ^\transpose  \clO_G(\Phi_0 -\Phi_0')   = \sum_{t=0}^\infty  (y_t - y_t')^2
\]
This is why the small eigenvalues observed in
\Fig{f:obsGramm} are a concern if we wish to reconstruct the entire state based on the observations $\{y_t\}$.

The proposition that follows concerns a class of LTI models that have the following symmetry property: \begin{equation}
A =  \begin{bmatrix}
A_g &  A_o
\\
A_o & A_g
\end{bmatrix}  \,, \qquad 
C  = [ \One^\transpose \mid 0^\transpose]  
\label{e:symLTI}
\end{equation}
where $A$ is a $d\times d$ matrix, and   the decomposition is in terms of two square matrices $A_g $, $ A_o$.   This requires that the integer $d$ is even.

An example is the pool pump model with 12 hour cleaning cycle.  Under certain conditions, a TCL model may be symmetric following a state transformation -- an example is given below.
Regardless of the details of the model, symmetry rules out observability.   

Analysis of the symmetric model is based on the  subspaces,
\[
\begin{aligned}
\clV &= \{ v\in\Re^d :  v_i= v_{d/2+i},\ 1\le i\le d/2\}
\\ 
\clV_0 &=\{ v\in \clV :  \One^\transpose v= 0\} 
\end{aligned}
\]
\begin{proposition}
\label{t:symNot}
Consider the symmetric LTI model \eqref{e:symLTI}.  If 
$A^\transpose $ is a transition matrix, then the subspace $ \clV_0$ is contained in the unobservable subspace. Consequently, the symmetric model is never observable, and  the rank of the observability matrix is no greater than $d/2+1$.
\end{proposition}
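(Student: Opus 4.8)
The plan is to produce, explicitly, a $(d/2-1)$-dimensional subspace of initial conditions on which the output of the LTI system \eqref{e:symLTI} is identically zero for all time — the natural candidate is $\clV_0$ — and then read off the rank bound by rank--nullity. Throughout I would write a generic element of $\clV$ as $v=(w;w)$ with $w\in\Re^{d/2}$, and let $\One$ denote an all-ones vector of whatever dimension the context demands (so that, e.g., $\One^\transpose v=2\One^\transpose w$ for $v=(w;w)$).

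First I would record how the dynamics act on $\clV$. From the block form \eqref{e:symLTI},
\[
A v=\begin{bmatrix}A_g & A_o\\ A_o & A_g\end{bmatrix}\begin{bmatrix}w\\ w\end{bmatrix}=\begin{bmatrix}Mw\\ Mw\end{bmatrix},\qquad M\eqdef A_g+A_o ,
\]
so $\clV$ is $A$-invariant and $A$ acts on the shared coordinate $w$ as the single $(d/2)\times(d/2)$ matrix $M$. An immediate induction then gives $A^k v=(M^kw;M^kw)$ for every $k\ge 0$, and $C v=\One^\transpose w$ for every $v=(w;w)$.

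Next I would unpack the hypothesis that $A^\transpose$ is a transition matrix: this says $A^\transpose\One=\One$, equivalently $\One^\transpose A=\One^\transpose$. Partitioning $\One^\transpose=[\One^\transpose\mid\One^\transpose]$ and multiplying by the block form of $A$, each of the two block-columns of $\One^\transpose A$ equals $\One^\transpose(A_g+A_o)=\One^\transpose M$; comparing with $[\One^\transpose\mid\One^\transpose]$ yields $\One^\transpose M=\One^\transpose$, hence $\One^\transpose M^k=\One^\transpose$ for all $k\ge 0$. Now combine: if $v=(w;w)\in\clV_0$ then $\One^\transpose w=0$, so for every $k\ge 0$
\[
C A^k v=\One^\transpose M^k w=\One^\transpose w=0 .
\]
Thus $v$ lies in the null space of the observability matrix, i.e.\ $\clV_0\subseteq\Sigma_{\bar o}$. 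Since $w\mapsto(w;w)$ maps $\{w:\One^\transpose w=0\}$ isomorphically onto $\clV_0$, we have $\dim\clV_0=d/2-1$, and rank--nullity bounds the rank of the observability matrix by $d-(d/2-1)=d/2+1$, which is $<d$ (the models of interest have $d\ge 4$), so the pair is not observable.

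I do not expect a real obstacle: the argument is elementary linear algebra. The one place requiring care is the transposition bookkeeping in the third step. Because the state here is a column of probabilities and $A_t=P_{\zeta_t}^\transpose$, the hypothesis is naturally a statement about $A^\transpose$, so the condition it imposes on $A$ — and hence on $M=A_g+A_o$ — is the column-sum identity $\One^\transpose M=\One^\transpose$ rather than a row-sum identity; getting the block partition of $\One^\transpose A$ exactly right is precisely what makes the telescoping $\One^\transpose M^k w=\One^\transpose w$ valid, and hence what kills the output on all of $\clV_0$.
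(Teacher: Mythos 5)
Your argument is correct and follows essentially the same route as the paper's proof: both reduce the dynamics on the symmetric subspace $\clV$ to the induced map $M=A_g+A_o$ and use the column-stochasticity of $A$ (i.e.\ $\One^\transpose A=\One^\transpose$, hence $\One^\transpose M=\One^\transpose$) to conclude $CA^kv=0$ on $\clV_0$, so $\clV_0\subseteq\Sigma_{\bar o}$. The only difference is cosmetic — you evaluate $CA^kv$ directly rather than first stating the invariance $A\colon\clV_0\to\clV_0$, and you spell out the rank--nullity count $d-(d/2-1)=d/2+1$ that the paper leaves implicit.
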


\begin{proof}
To prove the proposition we establish that $A\colon \clV_0\to\clV_0$, so that $A^k v\in\clV_0$ for any $k\ge 1$ and any $v\in\clV_0$.  By definition we also have $Cv=0$ for $v\in\clV_0$.
It follows that  $CA^kv=0$ for each $k$ when $v\in\clV_0$, so that $\clV_0\subseteq \Sigma_{\bar o}$ as claimed.

The subspace $\clV$ is equal to vectors of the form $v^\transpose=(z^\transpose \mid z^\transpose)$,
with $z\in \Re^{d/2}$,  and $\clV_0$ is the subset satisfying $\sum z_i=0$.    
Symmetry implies the identity
\[
Av = (w^\transpose \mid w^\transpose)^\transpose
\]
where $w = (A_g +  A_o) z$.   This shows that $A\colon \clV\to\clV$.

To complete the proof, we next show that $\sum w_i=0$ whenever $\sum z_i=0$, so that
$A\colon \clV_0\to\clV_0$.

Since $A^\transpose $ is a transition matrix, it follows that
\[
\One^\transpose =\One^\transpose A =  \begin{bmatrix}
\One^\transpose A_g+\One^\transpose  A_o \mid  \One^\transpose A_o + \One^\transpose A_g
\end{bmatrix}
\]
That is,   $Q= ( A_g +  A_o)^\transpose$ is a transition matrix:  $Q\One = \One$.
Consequently, if $z^\transpose \One = \sum_i z_i =0$, then
\[
\One^\transpose w = \One^\transpose (A_g +  A_o) z = z^\transpose Q \One = 0
\]
\end{proof}

Consider the following example in which the dimension of the symmetric state space model is $d=8$,  
\[
\begin{aligned}
 A_g &=\begin{bmatrix}
     1-\delta  & 0 & 0 & 0 \\
    a_{21} & 1-\delta  & 0 & 0 \\
    0 & a_{32} & 1-\delta  & 0 \\
 0 & 0 & a_{43} & 1-\delta  
\end{bmatrix}
\\
\text{\it and }\quad 
 A_0 &=\begin{bmatrix}
      a_{15} & a_{16} & a_{17} & \delta \\
      0 & 0 & 0 & 0\\
   0 & 0 & 0 & 0\\
   0 & 0 & 0 & 0
\end{bmatrix} 
\end{aligned}
\]
Based on the proposition,  the rank of the observability matrix is no greater than 5.

The following values are consistent with the  pool pump model with 12 hour cycle considered previously:   $\delta=1/6$,
$a_{21}=0.1654$, $a_{32}=0.0840$, $ a_{43} =0.0026$, $a_{15}=0.0013$, $a_{16}=0.0827$, $a_{17}=0.1641$. 
  The $8\times 8$ observability matrix $\clO$ was obtained using Matlab, and the {\tt null} command results in equality $\Sigma_{\bar o}=\clV_0$.  However, 
Matlab also returns these values for the log of the modulus of the eigenvalues:  $\{\log |\lambda_i|\} =$
\[
\{ 1.4 , 0.095, -1.8, -5.8, -19.7, -35.5, -36.5, -37.8\}
\]
Based on these values, it is not clear if the rank of $\clO$ is equal to 4, 5, or 8.  

\begin{figure}[h]
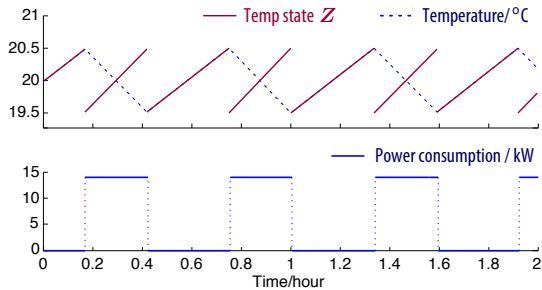

\Ebox{0.8}{TCL_temperature-POOL.pdf} 
\vspace{-2.5ex}
\caption{State transformation for the TCL model: the evolution of $\bfmZ$ is similar to the pool model.}
\label{f:TCLtoPOOL}
\vspace{-1.25ex}
\end{figure}


Consider next a TCL that provides cooling  (a refrigerator or air-conditioner),  with temperature $\theta_t$ constrained to the dead-band $[\Thetamin , \Thetamax]$.   A new state process $(Z_t,m_t)$ is obtained in which $m_t$ is the power state as before, and
\[
Z_t =\begin{cases}   
			\theta_t  & \text{if $m_t=0$}
\\			
				\Thetamin +\Thetamax-\theta_t \quad & \text{if $m_t=1$}
	\end{cases}   
\]   
A typical trajectory for an air-conditioning unit shown in  \Fig{f:TCLtoPOOL} reveals that the sample paths are very similar to the pool  model.  If the Markovian dynamics of $\bfmZ$
are symmetric, then the LTI model is not observable.

\subsection{Eigenvectors and reduced order observer}
\label{s:redObs}

The structure of eigenvectors and reduced order observers are described for both the pool model and a  TCL model.

\paragraph*{Observability of the pool model}

We return to the homogeneous model with 12hr/day cleaning cycle.   The   observability Grammian for the time-varying linear system was found to be highly ill-conditioned in all of our experiments -- one example is illustrated  in  \Fig{f:obsGramm}.   The LTI model obtained with $\bfzeta\equiv 0$ is not observable, by an application of  \Prop{t:symNot}.

\begin{figure}[h]
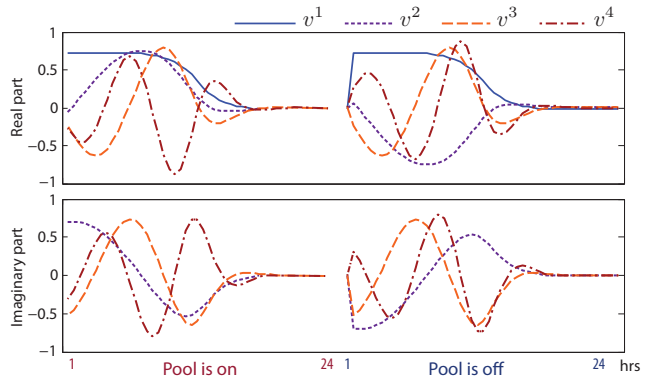

\Ebox{.95}{eig1-4-lr.pdf} 
\vspace{-2.5ex}
\caption{First four  eigenvectors of $A$ for the pool model.}
\label{f:eig1-4}
\vspace{-1.25ex}
\end{figure}

\Fig{f:eig1-4} shows four eigenvectors of $A$  (left eigenvectors of $P_0$),  ordered with respect to the magnitude of the corresponding eigenvalues.
To avoid redundancy we considered only eigenvalues with non-negative imaginary part. 
The eigenvector $v^1$ has real eigenvalue $\lambda_1=1$:  it is 
proportional to
the invariant pmf  $\pi_0$, expressed as a column vector.   The remaining three eigenvalues are complex,  as are the corresponding eigenvectors $\{v^2,v^3, v^4\}$.  

A reduced-order observer was constructed using a state transformation based on the LTI model.   The construction can be interpreted as a truncated representation of the state with respect to the eigenvectors of $A$.  In view of the sinusoidal form shown in  \Fig{f:eig1-4},  this approximation is similar to a Fourier approximation.

  \begin{figure}[h]
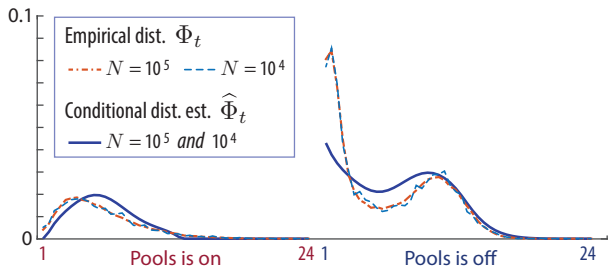

\Ebox{.9}{RealAndCondDistOrder7_N=10000_sample=10+100-lr.pdf}
\vspace{-2.5ex}
\caption{The  reduced-order observer yields good estimates of the state $\Phi_t$ for typical state values.  }
\label{f:RealAndCondDistOrder7}
\vspace{-1.25ex}
\end{figure}

After several experiments it was found that a 7th order approximation fit the dynamics well:
the  Bode plots nearly agreed, and the reduced order observer  tracked the true output $\util(X_t)$ nearly as well as the full order observer.  

Experiments were conducted for the model described in \Section{s:QoS},   with a 12hr/day cleaning cycle, and a sampling rate of  $0.1\%$.  \Fig{f:RealAndCondDistOrder7} shows a comparison of the state and its estimate at a particular time in two simulations, distinguished by load population.    

The empirical distribution $\Phi_t$ is approximately the same in each case,  $N=10^4$ and $N=10^5$;  it is slightly smoother for the larger population.   As seen in the figure, the state estimates are indistinguishable in the two cases.  The state $\Phi_t$  is far from the steady-state (proportional to $v^1$ shown in \Fig{f:eig1-4}),  yet the estimates are good in each experiment.  The explanation for success is most likely the similar continuity of the empirical distribution and the first seven eigenvectors.

%

\paragraph*{Observability of a TCL model}

The pool example is similar to the TCL model considered in  \cite{matkoccal13} and several other recent papers.   
The controlled Markov model for the $i$th load again consists of two parts:  a binary component indicating if the unit is on or off,  and a continuous component indicating its temperature.   
The state of the $i$th load at time $t$ is denoted $X^i_t = (m^i_t,\theta^i_t)$.   

The complete system description follows \cite{matkoccal13}:  For the nominal dynamics considered in prior work, this 
is defined by a dead-band $[\Thetamin , \Thetamax]$. 
For cooling devices,
\[ 
m^i_{t+1} = \left\{ 
  \begin{array}{l l l}
    0, & \quad  \theta^i_{t+1} < \Thetamin\\
    1, &  \quad \theta^i_{t+1} > \Thetamax\\
    m^i_t, &  \quad \text{otherwise}
  \end{array} \right.
\]  
The temperature is modeled as a linear system driven by white noise:
 \[
 \theta^i_{t+1} = a^i \theta^i_t + (1-a^i)(\thetamb_t - m^i_t R^i \ptcl) + \eta^i_t,
 \]
in which $0<a^i<1$.


The parameters  summarized in Table~II are taken from  \cite{matkoccal13}.  In this prior work a heterogeneous model was considered, in which the values $(a^i,R^i,C^i)$ were   sampled from a probability distribution.  A homogeneous model is considered in the experiments surveyed here, in which $R^i=C^i=2$ for each $i$.

\section*{{\sc Table II: TCL Parameters}}
\vspace{-.5em}
\begin{center}  
{\small
 \begin{tabular}{ll}
    \hline
\vspace{-.5em}
    \\
\hspace{-.5em}
        \textbf{Parameter}  & \textbf{Value} \\[1ex] 
        \hline
        \\[.1ex]
 $\tau$:  time step (secs) & 2   
 \\
$[\Thetamin , \Thetamax]$: temperature deadband  \ ($^\text{o}$C) & [20, 21]
        \\
      $\thetamb$: ambient temperature  \ ($^\text{o}$C) & 32
        \\
      $R$: thermal resistance  \ ($^\text{o}$C/kW) & 2
        \\
      $C$: thermal capacitance  \ (kWh/$^\text{o}$C) & 2
        \\
      $\eta$: modeling noise  \ ($^\text{o}$C)& $\sim N(0, 2.5\times 10^{-7}$)
        \\
      $\ptcl$: energy transfer rate  \ (kW) & 14
      \\[.5ex]
$a^i = \exp\{-\tau/(CR) \}$ &  
        \\[1ex]   \hline 
    \end{tabular}
    } 
\end{center}

\smallbreak

The state space for the TCL model is continuous.  A finite state-space approximation is obtained by binning the dead-band interval into 40 bins of equal size.  A Markov chain model is desired with a total of 80 states,  since the state captures temperature as well as whether the load is on or off.   

An empirical model was obtained via Monte-Carlo. Simulation experiments with 10,000 TCLs and 3,600 time steps  were run to obtain an approximation of the steady-state joint distribution,
\[
\Pi(x^j,x^k) = \Prob\{X_t=x^j,\ X_{t+1}=x^k\},\quad 1\le j,k \le 80.  
\]
The approximate transition matrix is then obtained using Bayes' rule:
\[
P_0(x^j,x^k)  = \frac{\Pi(x^j,x^k) }{\pi(x^i)},
\ \  \text{
with $\pi(x^i)=\sum_k  \Pi(x^j,x^k) $.  }
 \]
 \spm{Nov 20, 2015 (evening in Gainesville):  I just noticed a small notational issue. We should write $A_{kj} = P_0(x^j,x^k) $,
 just as we write $C_j = \util(x^j)$.}

The approximate Markovian dynamics are used in  \cite{matkoccal13} to define the heuristic mean-field model,  
\[
\Phi_{t+1} = A \Phi_{t},\qquad \{A_{kj}\} = \{P_0(x^j,x^k) \}.
\]
The observation matrix used in  \cite{matkoccal13} is the same as in the pool example, except for a scaling: 
\[
C_k =  10^5 \times \ptcl \times \ind\{\text{State $k$ is ON}\}. 
\] 
The observability Grammian for this LTI system is highly  ill-conditioned:
\Fig{f:RealAndCondDistOrder7_TCL}
shows that the rapid decay of the eigenvalues is similar to what was observed in \Fig{f:obsGramm} for the pool model.

\spm{Let's leave out 'rank' here since this depends on tolerance (I explain this earlier)
\\
The rank command in Matlab applied to the observability Grammian for this LTI system resulted in a value of 32, meaning that the system is not observable.   
}

\spm{Yue:  Do we need to make any more changes to the TCL section?  Also,  I believe 'Grammian' is far more common than 'Grammian'!  Look it up on Google.  Only Wikipedia insists on one 'm' \\
response: I think this section is good: we described how we obtained the A and C matrices of TCL model, and provided eigenvalues of its Observability Grammian.
}
\begin{figure}[h]
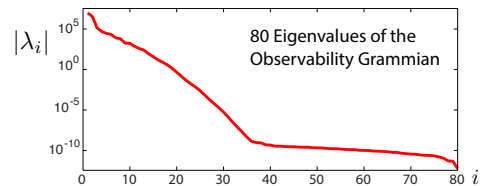

\Ebox{0.7}{eig_ObGram_TCL_full-lr.pdf} 
\vspace{-2.5ex}
\caption{Magnitude of eigenvalues of observability Grammian for TCL model}
\label{f:RealAndCondDistOrder7_TCL}
\vspace{-1.25ex}
\end{figure} 

The structure of eigenvectors is also similar. 
\Fig{f:eig1-4_TCL} shows four eigenvectors of the  matrix $A$,
corresponding to the four  eigenvalues of maximum magnitude (ignoring duplication from
complex conjugate pairs).    

\begin{figure}[h]
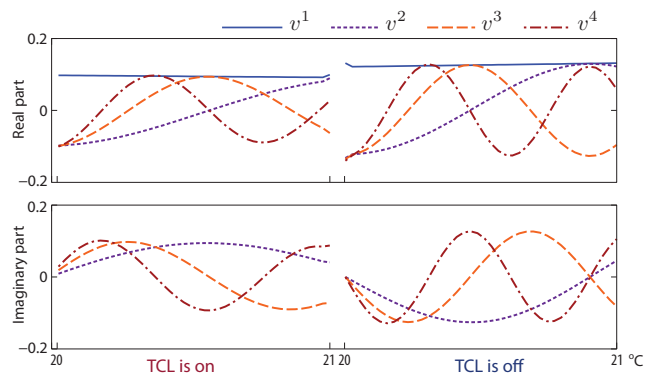

\Ebox{.95}{TCL-eig1-4-lr.pdf} 
\vspace{-2.5ex}
\caption{First four  eigenvectors of $A$ for a TCL model.}
\label{f:eig1-4_TCL}
\vspace{-1.25ex}
\end{figure} 

A finer analysis of the continuous time / continuous state model for TCLs (as considered in  
\cite{lumThesis15,malcho88}) may give greater insight on the eigenstructure observed for these models.

 \section{Conclusions}
 \label{s:con}
 
 The construction of a Kalman filter for the joint population/individual dynamics is possible, and performance is remarkable in the test cases considered.   In particular, it is very surprising to obtain accurate tracking of both the mean and variance of QoS for an individual, given extremely noisy estimates of the population.

An open topic for future research is the state estimation techniques that take into account  opt-out, 
which is used to ensure good QoS to loads  \cite{chebusmey14}.  It may be possible to obtain a reduced order observer
for the very complex model for joint state-QoS dynamics.   Alternatively, it may be possible to exploit the structure of the distribution of QoS observed in numerical experiments.   When subject to opt-out control, the histogram of QoS is similar to a conditional Gaussian that is constrained to the interval specified by opt-out parameters.  A nonlinear filter may be constructed that takes this structure into account.

\smallbreak

\bibliographystyle{IEEEtran}
\bibliography{strings,markov,q}

\def\cprime{$'$}\def\cprime{$'$}
\begin{thebibliography}{10}
\providecommand{\url}[1]{#1}
\csname url@samestyle\endcsname
\providecommand{\newblock}{\relax}
\providecommand{\bibinfo}[2]{#2}
\providecommand{\BIBentrySTDinterwordspacing}{\spaceskip=0pt\relax}
\providecommand{\BIBentryALTinterwordstretchfactor}{4}
\providecommand{\BIBentryALTinterwordspacing}{\spaceskip=\fontdimen2\font plus
\BIBentryALTinterwordstretchfactor\fontdimen3\font minus
  \fontdimen4\font\relax}
\providecommand{\BIBforeignlanguage}[2]{{%
\expandafter\ifx\csname l@#1\endcsname\relax
\typeout{** WARNING: IEEEtran.bst: No hyphenation pattern has been}%
\typeout{** loaded for the language `#1'. Using the pattern for}%
\typeout{** the default language instead.}%
\else
\language=\csname l@#1\endcsname
\fi
#2}}
\providecommand{\BIBdecl}{\relax}
\BIBdecl

\bibitem{huacaimal07}
M.~Huang, P.~E. Caines, and R.~P. Malhame, ``Large-population cost-coupled
  {LQG} problems with nonuniform agents: Individual-mass behavior and
  decentralized {{$\varepsilon$}-Nash} equilibria,'' \emph{IEEE Trans. Automat.
  Control}, vol.~52, no.~9, pp. 1560--1571, 2007.

\bibitem{matkoccal13}
J.~Mathieu, S.~Koch, and D.~Callaway, ``State estimation and control of
  electric loads to manage real-time energy imbalance,'' \emph{IEEE Trans.
  Power Systems}, vol.~28, no.~1, pp. 430--440, 2013.

\bibitem{chebusmey14}
Y.~Chen, A.~Bu\v{s}i\'{c}, and S.~Meyn, ``Individual risk in mean field control
  with application to automated demand response,'' in \emph{53rd IEEE
  Conference on Decision and Control}, Dec 2014, pp. 6425--6432.

\bibitem{calhis11}
D.~Callaway and I.~Hiskens, ``Achieving controllability of electric loads,''
  \emph{Proceedings of the {IEEE}}, vol.~99, no.~1, pp. 184 --199, jan. 2011.

\bibitem{barbusmey14}
P.~Barooah, A.~Bu\v{s}i\'{c}, and S.~Meyn, ``Spectral decomposition of
  demand-side flexibility for reliable ancillary services in a smart grid,'' in
  \emph{Proc. {48th Annual Hawaii International Conference on System Sciences
  (HICSS)}}, Kauai, Hawaii, 2015, pp. 2700--2709.

\bibitem{meybarbusyueehr15}
S.~Meyn, P.~Barooah, A.~Bu\v{s}i\'{c}, Y.~Chen, and J.~Ehren, ``Ancillary
  service to the grid using intelligent deferrable loads,'' \emph{IEEE Trans.
  Automat. Control}, vol.~60, no.~11, pp. 2847--2862, Nov 2015.

\bibitem{haolinkowbarmey14}
H.~Hao, Y.~Lin, A.~Kowli, P.~Barooah, and S.~Meyn, ``Ancillary service to the
  grid through control of fans in commercial building {HVAC} systems,''
  \emph{IEEE Trans. on Smart Grid}, vol.~5, no.~4, pp. 2066--2074, July 2014.

\bibitem{brolureispiwei10}
A.~Brooks, E.~Lu, D.~Reicher, C.~Spirakis, and B.~Weihl, ``Demand dispatch,''
  \emph{IEEE Power and Energy Magazine}, vol.~8, no.~3, pp. 20--29, May 2010.

\bibitem{kizmal14a}
A.~Kizilkale and R.~Malhame, ``A class of collective target tracking problems
  in energy systems: Cooperative versus non-cooperative mean field control
  solutions,'' in \emph{IEEE Conference on Decision and Control}, 2014, pp.
  3493--3498.

\bibitem{chrtomlebpao14}
K.~Christakou, D.-C. Tomozei, J.-Y. Le~Boudec, and M.~Paolone, ``{GECN:}
  primary voltage control for active distribution networks via real-time
  demand-response,'' \emph{IEEE Trans. on Smart Grid}, vol.~5, no.~2, pp.
  622--631, March 2014.

\bibitem{lumThesis15}
L.~C. Totu, ``Large scale demand response of thermostatic loads,'' Ph.D.
  dissertation, Faculty of Engineering and Science, Aalborg University, 2015.

\bibitem{cai88}
P.~E. Caines, \emph{{Linear Stochastic Systems}}.\hskip 1em plus 0.5em minus
  0.4em\relax New York: John Wiley \& Sons, 1988.

\bibitem{lipkrirub84}
N.~V. Krylov, R.~S. Lipster, and A.~A. Novikov, ``Kalman filter for {Markov}
  processes,'' in \emph{Statistics and Control of Stochastic Processes}.\hskip
  1em plus 0.5em minus 0.4em\relax New York: {Optimization Software, inc.},
  1984, pp. 197--213.

\bibitem{caikiz13a}
P.~Caines and A.~Kizilkale, ``Recursive estimation of common partially observed
  disturbances in {MFG} systems with application to large scale power
  markets,'' in \emph{52nd IEEE Conference on Decision and Control}, Dec 2013,
  pp. 2505--2512.

\bibitem{sencai14}
N.~Sen and P.~Caines, ``Mean field games with partially observed major player
  and stochastic mean field,'' in \emph{IEEE Conference on Decision and
  Control}, 2014, pp. 2709--2715.

\bibitem{YueChenThesis16}
Y.~Chen, ``Markovian demand dispatch design for virtual energy storage to
  support renewable energy integration,'' Ph.D. dissertation, University of
  Florida, Gainesville, FL, USA, July 2016.

\bibitem{BPA}
``{BPA} balancing authority,'' Online, \url{http://tinyurl.com/BPAgenload},
  \url{http://tinyurl.com/BPAbalancing}.

\bibitem{malcho88}
R.~Malham\'e and C.-Y. Chong, ``On the statistical properties of a cyclic
  diffusion process arising in the modeling of thermostat-controlled electric
  power system loads,'' \emph{SIAM J. Appl. Math.}, vol.~48, no.~2, pp.
  465--480, 1988.

\end{thebibliography}


\vspace*{-1.5\baselineskip}

\begin{IEEEbiography}[%
{\includegraphics[width=1in,height=1.25in,clip,keepaspectratio]{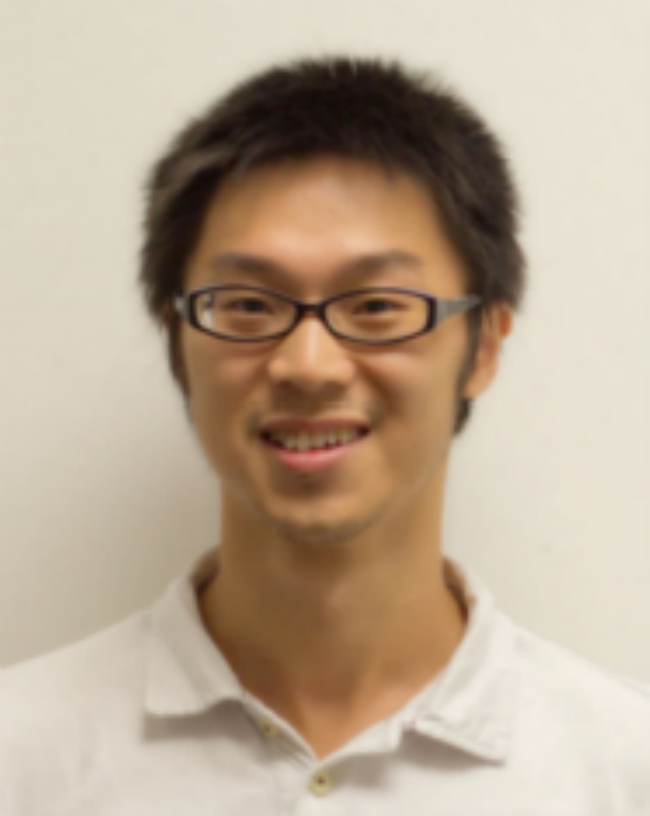}}%
]{Yue Chen}
 (S' 14) received the B.E. degree in electronic and information engineering from Harbin Engineering University, Harbin, China, in 2008, the M.E. degree in electrical engineering from University of Detroit Mercy, Detroit, USA, in 2010, the M.S. degree in Electrical and Computer Engineering in 2012 from University of Florida, Gainesville, USA.  During the Fall of 2015 he was an intern at Los Alamos National Laboratory
conducting research on control techniques for  demand dispatch. 
 In 2016 he is expected receive the Ph.D.\ degree in electrical and computer engineering at University of Florida.
\end{IEEEbiography}
\begin{IEEEbiography}[%
{\includegraphics[width=1in,height=1.25in,clip,keepaspectratio]{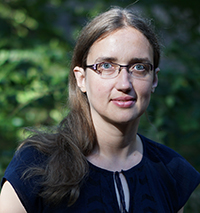}}%
]{Ana Bu\v{s}i\'{c}}
is a Research Scientist at Inria Paris -- Rocquencourt, and a member of the Computer Science Department at Ecole Normale Sup\'erieure, Paris, France. She received a M.S. degree in Mathematics in 2003, and a Ph.D. degree in Computer Science in 2007, both from the University of Versailles. She was a post-doctoral fellow at Inria Grenoble -- Rh\^one-Alpes and at University Paris Diderot-Paris 7. She joined Inria in 2009. She is a member of the Laboratory of Information, Networking and Communication Sciences, a joint lab between Alcatel-Lucent, Inria, Institut Mines-T\'el\'ecom, UPMC Sorbonne Universities, and SystemX. Her research interests include stochastic modeling, simulation, performance evaluation and optimization, with applications to communication networks and energy systems. 
\end{IEEEbiography}
\begin{IEEEbiography}[%
{\includegraphics[width=1in,height=1.25in,clip,keepaspectratio]{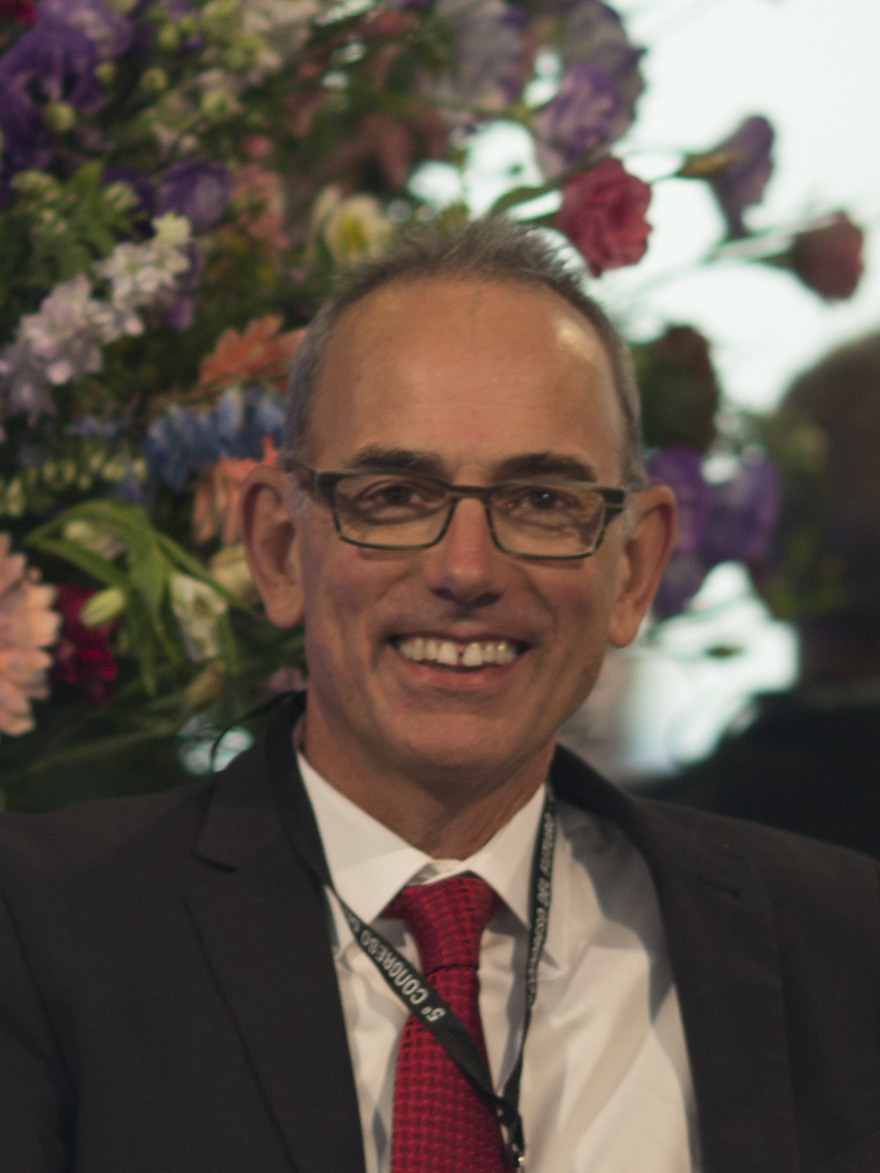}}%
]{Sean P.~Meyn}
(S'85, M'87, SM'95, F'02)
received the B.A. degree in Mathematics Summa Cum Laude from UCLA in 1982, and
the PhD degree in Electrical Engineering from McGill University in 1987 (with
Prof. P. Caines).  He held a  two year postdoctoral fellowship at the
Australian National University in Canberra, and was a professor at the
University of Illinois from 1989-2011.  Since January, 2012
he has been a professor and has held the Robert C. Pittman Eminent Scholar
Chair in Electrical and Computer Engineering at the University of Florida.
He is coauthor with Richard Tweedie of the monograph Markov Chains and
Stochastic Stability, Springer-Verlag, London, and received jointly with
Tweedie the 1994 ORSA/TIMS Best Publication In Applied Probability Award.
His research interests include stochastic processes, optimization, complex
networks, and information theory, with applications to smarter cities and
smarter grids.
\end{IEEEbiography}

\null  

\end{document}